\documentclass{article}

\usepackage{amssymb}
\usepackage[sort,numbers]{natbib}
\usepackage{authblk}
\usepackage{a4wide}
\usepackage{times}

\usepackage{soul}
\usepackage{url}
\usepackage[hidelinks]{hyperref}
\usepackage[utf8]{inputenc}
\usepackage[small]{caption}
\usepackage{graphicx}
\usepackage{tabularx}
\usepackage{colortbl}
\usepackage{paralist}
\usepackage{amsmath}
\usepackage{todonotes}
\usepackage{amsthm}
\usepackage{amssymb}
\usepackage{amsfonts}
\usepackage{mathtools}
\usepackage{booktabs}
\usepackage{algorithm}
\usepackage{algorithmic}
\urlstyle{same}
\usepackage[capitalize]{cleveref}
\usepackage{tikz}
\usetikzlibrary{arrows}
\usetikzlibrary{shapes}
\usetikzlibrary{decorations.pathreplacing}
\usetikzlibrary{shapes.geometric}
\usetikzlibrary{patterns}
\usetikzlibrary{fadings}

\usepackage{thm-restate}

\definecolor{ourblue}{RGB}{130,200,250}
\definecolor{ourdarkblue}{RGB}{30,60,140}
\definecolor{ourgreen}{RGB}{0,100,0}
\definecolor{ourred}{RGB}{180,20,30}
\definecolor{lipicsyellow}{RGB}{220,200,0}
\definecolor{ourorange}{RGB}{230,100,0}
\definecolor{ourgray}{RGB}{100,100,100}

\tikzstyle{vert2}=[circle,inner sep=1.5,fill=white,draw,minimum size=.2cm]
\tikzstyle{vert3}=[inner sep=1.5,fill=white,draw=black,minimum size=.2cm]
\usetikzlibrary{calc}
\usetikzlibrary{arrows.meta}

\usepackage{caption} %
\usepackage{subcaption}

\newtheorem{theorem}{Theorem}

\newtheorem{observation}[theorem]{Observation}

\theoremstyle{definition}

\newcommand{\problemdef}[3]{
	\begin{center}
	\begin{minipage}{0.96\textwidth}
		\noindent
		#1
		\vspace{5pt}\\
		\setlength{\tabcolsep}{3pt}
		\begin{tabularx}{\textwidth}{@{}lX@{}}
			\textbf{Input:}     & #2 \\
			\textbf{Question:}  & #3
		\end{tabularx}
	\end{minipage}
	\end{center}
}

\DeclarePairedDelimiterX{\abs}[1]{\lvert}{\rvert}{#1}

\newcommand{\lifetime}{\ensuremath{T}}

\newcommand{\TG}{\mathcal{G}}
\newcommand{\TP}{\mathcal{P}}
\newcommand{\TGcompact}{\ensuremath{\mathcal{G}=(V, (E_i)_{i\in[\lifetime]})}}

\newcommand{\NP}{\textrm{NP}}
\newcommand{\yes}{\emph{yes}}

\newcommand{\ie}{i.\,e.,\ }

\newcommand{\TempDisjointWalks}{\textsc{Temporally Disjoint Walks}}
\newcommand{\TempDisjointPaths}{\textsc{Temporally Disjoint Paths}}
\newcommand{\STempDisjointWalks}{\textsc{Temporally Disjoint Walks}}  %
\newcommand{\STempDisjointPaths}{\textsc{Temporally Disjoint Paths}}  %

\newcommand{\NonStrVars}{\textsc{Temporally Disjoint (Paths/Walks)}}

\newcommand{\PathVersion}{\textsc{Temporally Disjoint Paths}}  %
\newcommand{\WalkVersion}{\textsc{Temporally Disjoint Walks}}  %
\newcommand{\AllProbs}{\textsc{Temporally Disjoint (Paths/Walks)}}  %

\usepackage{etoolbox}

\title{Interference-free Walks in Time: Temporally Disjoint Paths}

\author[2]{Nina Klobas\thanks{E-Mail: \texttt{nina.klobas@durham.ac.uk}}}
\author[2]{George B. Mertzios\thanks{E-Mail: \texttt{george.mertzios@durham.ac.uk}}\thanks{Supported by the EPSRC grant EP/P020372/1 and by DFG RTG~2434 while visiting TU~Berlin.}}
\author[3]{Hendrik Molter\thanks{E-Mail: \texttt{molterh@post.bgu.ac.il}}\thanks{Supported by the German Research Foundation (DFG), project MATE (NI 369/17), and by the Israeli Science Foundation (ISF), grant No.~1070/20. The main part of this work was done while the author was affiliated with TU~Berlin.}}
\author[1]{Rolf Niedermeier\thanks{E-Mail: \texttt{rolf.niedermeier@tu-berlin.de}}}
\author[1]{Philipp Zschoche\thanks{E-Mail: \texttt{zschoche@tu-berlin.de}}}

\affil[1]{Algorithmics~and~Computational~Complexity, TU Berlin, Germany}
\affil[2]{Department of Computer Science, Durham University, UK}
\affil[3]{Department of Industrial Engineering and Management, Ben-Gurion~University~of~the~Negev, Israel}

\begin{document}

\maketitle

\begin{abstract}
We investigate the computational complexity of finding \emph{temporally disjoint} 
paths or walks in temporal 
graphs. There, the edge set changes over discrete time steps
and a temporal path (resp.\ walk) uses edges that appear 
at %
monotonically increasing 
time steps.
Two paths (or walks) are temporally disjoint if they never use the
same vertex %
at the same time; otherwise, they interfere.
This reflects applications in robotics, traffic
routing, or finding safe pathways in %
dynamically changing networks.

On the one extreme, we show that on general graphs the problem is
computationally hard. The ``walk version'' is W[1]-hard when parameterized by
the number of walks.
However, it is polynomial-time solvable for any constant number of walks. The
``path version'' remains NP-hard even if we want to find
only two temporally disjoint paths.
On the other extreme, 
restricting the input temporal graph 
to have a path as underlying graph,
quite counterintuitively, we 
find NP-hardness in general 
but also identify natural
tractable cases.
\end{abstract}

\section{Introduction}
Computing (vertex-)disjoint paths in a graph is a cornerstone problem
of algorithmic graph theory. %
One of the deepest achievements in discrete mathematics,
graph minor theory~\cite{robertson1985graph,robertson1995graph}, as well as the theory 
of parameterized complexity analysis~\cite{DF13} are tightly 
connected to it.
The problem is known to be solvable in quadratic time if the number of paths is
constant, that is, it is fixed-parameter tractable when parameterized by the number of 
paths~\cite{KAWARABAYASHI2012424}.
Besides being of fundamental interest in (algorithmic) graph theory,
finding disjoint paths has many applications and there exist numerous 
variations of the problem.
In AI and robotics scenarios, for instance, multi-agent path finding 
is an intensively studied, closely related problem~\cite{Stern19,SternSFK0WLA0KB19}.

Coming from the graph-algorithmic side, we propose a new view on finding 
disjoint paths (and walks), that is, we place the problem into the 
world of temporal graphs.
We add a ``new dimension'' to the classic, static 
graph scenario by generalizing to a setting where the edges of a graph may
appear and disappear over (discrete) time. In our model, we consider two 
paths (or walks) to be disjoint if they do not use the
same vertex %
at the same point of time.
Consider \cref{fig:example} for an example.
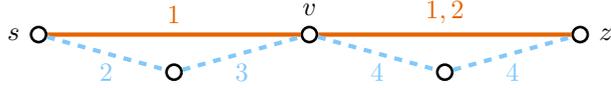
\begin{figure}
		\centering
		\begin{tikzpicture}[line width=1pt, yscale=0.5,xscale=1.8]
				\node[vert2,label=left:$s$] (s) at (0,0) {};
				\node[vert2] (1) at (1,-1) {};
				\node[vert2,label=above:$v$] (2) at (2,0) {};
				\node[vert2]  (3) at (3,-1) {};
				\node[vert2,label=right:$z$] (z) at (4,0) {};

				\path[-,draw,ultra thick,ourorange] (s) edge node[above] {$1$} (2);
				\path[-,draw,ultra thick,ourorange] (2) edge node[above] {$1,2$} (z);
				\path[-,draw,ultra thick,dashed,ourblue] (s) edge node[below] {$2$} (1);
				\path[-,draw,ultra thick,dashed,ourblue] (1) edge node[below] {$3$} (2);
				\path[-,draw,ultra thick,dashed,ourblue] (2) edge node[below] {$4$} (3);
				\path[-,draw,ultra thick,dashed,ourblue] (3) edge node[below] {$4$} (z);
		\end{tikzpicture}	
		\caption{A temporal graph where a label of an edge reflects at which time it is available. 
		There are two temporally disjoint $(s,z)$-paths $P_1$ and~$P_2$, 
		where $P_1$ uses the solid (orange) edges 
and $P_2$ the dashed (blue) edges. Here, $P_1$ uses $v$ before $P_2$.}
		\label{fig:example}
\end{figure}
Moreover, the path finding also has to take into account that edges are
not permanently available, reflecting dynamic aspects of many real-world scenarios 
such as routing in traffic or communication networks, or the very dynamic nature 
of social networks. We intend to initiate studies on this natural scenario.
Doing so, we focus on two extreme cases for the underlying graphs, namely the 
(underlying) graph structure being completely unrestricted or it being
restricted to just a path graph. For these opposite poles, 
performing (parameterized) computational complexity studies, 
we present surprising discoveries.
Before coming to these, we discuss (excerpts of) the large body of related work.

\paragraph*{Related work.}
As said, both the literature on (static) disjoint paths and its many variants
as well as on multi-agent path finding is very rich. Hence, we only list a 
small fraction of the relevant related work.
In context of graph-algorithmic work, the polynomial-time
(in\nobreakdash-)approximability of the NP-hard maximization version has been 
studied~\cite{ChuzhoyKN17}. Variants of the basic problem studied include 
bounds on the path length~\cite{Golovach2011paths} or relaxing on
the disjointness of 
paths~\cite{FluschnikKNS19,FluschnikMS19,Guo2018fast,tao2020finding}.

In directed graphs, finding two disjoint paths is already
NP-hard~\cite{fortune1980directed}, whereas in directed acyclic graphs the
problem is solvable in polynomial time for every fixed number of
paths~\cite{slivkins2010parameterized}.

As to multi-agent path planning, we remark that it has been intensively 
researched (with several possible definitions) 
in the last decade in the AI and robotics 
communities~\cite{AlmagorL20,AtzmonsSFWBZ20,Standley10,Stern19,SternSFK0WLA0KB19}.
Timing issues (concurrency of
moving agents) or various objective functions of the agents 
play a fundamental role here, and also a high variety of conflict scenarios 
is studied.\footnote{Also see %
http://mapf.info/}
The scenario we study in this work can be interpreted as a basic variant
of multi-agent path planning, now translated into the world of temporal graphs.

In algorithmic graph theory, also edge-colored graphs are studied.
Edge-colored graphs essentially are multilayer (or multiplex) graphs 
where the essential difference to temporal graphs is 
that there is no order on the graph layers.
Here, path-finding scenarios are e.g.\ motivated by applications in social 
and optical (routing) networks~\cite{Dondi2017,Santos2017multicolor,Wu2012maximum}.

Finally, as to temporal graphs, note that several prominent graph problems
have been studied in this fairly new framework.
This particularly includes research on path problems~\cite{BentertHNN20,EnrightMMZ2021,HMZ19,MertziosMS19,WuCKHHW16}. 
In particular, another model of vertex-disjoint temporal paths~\cite{KKK02},
where two temporal paths are considered vertex-disjoint if they do not visit the
same vertex. 
The problem of finding two such paths is NP-hard~\cite{KKK02}. Note
that the major difference to our model is that we allow two \emph{temporally}
disjoint paths to visit the same vertex as long as they do not both visit that
vertex at the same \emph{time}.

\paragraph*{Our contributions.}
Our results can be grouped into two parts.
First, studying temporal graphs where the underlying graph (which 
is obtained by making all temporal edges permanent) is 
unrestricted, we show that finding walks instead of paths turns out to 
be computationally easier. More specifically, finding temporally disjoint 
walks is W[1]-hard with respect to the number of walks but can be solved in
polynomial time if this number is constant 
(i.e., we develop an XP~algorithm),
whereas finding temporally disjoint paths already turns out to be NP-hard for
two paths.
Second, restricting the input to be a temporal line (i.e.,~the underlying graph to be a path), we find a polynomial-time
algorithm for a relevant special case while the problem remains NP-hard in
general (for both paths and walks). However, we also provide a fixed-parameter tractability result with respect to the number of paths.

\begin{table}[t]
\centering
\arrayrulecolor{lightgray}
\begin{tabular}{c|c|c}
\arrayrulecolor{black}
\toprule
\textsc{Temp.\ Disjoint} & \textsc{Paths} & \textsc{Walks}\\
\midrule
unrestricted & NP-hard  &  W[1]-hard wrt.\ $|S|$ \\
underlying graph & for $|S|=2$ & XP wrt.\ $|S|$ \\
\arrayrulecolor{lightgray}
\midrule
temporal line & \multicolumn{2}{l}{\hspace{1,4cm}NP-hard} \\
or tree &  FPT wrt.\ $|S|$ & \emph{open} \\
 \midrule
temporal line \& $S$  & \multicolumn{2}{c}{} \\
contains only pairs &  \multicolumn{2}{l}{\hspace{1,35cm}poly-time}  \\
of extremal points &  \multicolumn{2}{c}{}  \\
\arrayrulecolor{black}
\bottomrule
\end{tabular}
\caption{Overview computational complexity of \AllProbs. Here, $S$ is
the multiset of source-sink pairs. Temporal line means that the underlying graph is a path.}
\label{table:complexity}
\end{table}

\section{Preliminaries and problem definition}\label{sec:prelims}
We denote by~$\mathbb N$ and~$\mathbb N_0$ the natural numbers excluding and including~$0$, respectively.
An interval on $\mathbb N_0$ from $a$ to $b$ is denoted by $[a,b] \coloneqq \{ i\in \mathbb N_0 \mid a \leq i \leq b\}$
and $[a] \coloneqq [1,a]$.
\paragraph{Static graphs.}
An undirected graph~$G=(V,E)$ consists of a set~$V$ of vertices 
and a set~$E\subseteq \{\{v,w\}\mid v,w\in V,v\neq w\}$ of edges.
For a graph~$G$, we also denote by~$V(G)$ and~$E(G)$ the vertex and edge set of~$G$, respectively.
A path~$P=(V,E)$ is a graph with a set~$V(P)=\{v_1,\dots,v_k\}$ of distinct vertices and edge set~$E(P)=\{\{v_i,v_{i+1}\}\mid 1\leq i<k\}$
(we often represent path~$P$ by the tuple~$(v_1,v_2,\dots,v_k)$).
We say that~$P$ is a $(v_1,v_k)$-path and that $P$ \emph{visits} all vertices in $V(P)$.
\paragraph{Temporal graphs and temporally disjoint paths.}
A temporal graph~$\TGcompact$ consists of a set~$V$ of vertices and lifetime~$\lifetime$ many edge sets~$E_1,E_2,\dots,E_\lifetime$ over~$V$.
The pair $(e,i)$ is a \emph{time edge} of $\TG$ if $e \in E_i$.
The graph~$(V,E_i)$ is called the~$i$-th layer of $\TG$.
The \emph{underlying graph} of~$\TG$ is the static graph~$(V,\bigcup_{i=1}^\lifetime E_i)$.
A \emph{temporal ($s,z$)-walk} (or \emph{temporal walk} from $s$ to $z$) of length~$k$ from vertex $s=v_0$ to vertex $z=v_k$ in~$\TG$ is a sequence
$P = \left(\left(v_{i-1},v_i,t_i\right)\right)_{i=1}^k$
of triples %
such that for all $i\in[k]$ we have that $\{v_{i-1},v_i\}\in E_{t_i}$ and for
all $i\in [k-1]$ we have that $t_i \leq t_{i+1}$. %
The \emph{arrival time} of $P$ is $t_k$.
We say that $P$ 
\emph{visits} the vertices $V(P)\coloneqq \{ v_i \mid i\in [0,k] \}$. In particular, $P$ visits vertex $v_i$ during the time 
interval $[t_{i}, t_{i+1}]$, for all $i \in [k-1]$.
Furthermore, we say that $P$ visits $v_0$ during time interval $[t_1,t_1]$ and
$P$ visits~$v_k$ during time interval $[t_k,t_k]$.
A temporal ($s,z$)-walk $P = \left(\left(v_{i-1},v_i,t_i\right)\right)_{i=1}^k$ of length $k$ 
is called a \emph{temporal ($s,z$)-path} (or \emph{temporal path} from $s$ to $z$) if~$v_i\neq v_j$ whenever $i\neq j$.
Given two temporal walks $P_1, P_2$ 
we say that $P_1$ and $P_2$ \emph{temporally intersect} 
if there exists a vertex $v$ and two time intervals $[a_1,b_1],[a_2,b_2]$, where $[a_1,b_1]\cap[a_2,b_2]\neq \emptyset$, 
such that $v$ is visited by $P_1$ during~$[a_1,b_1]$ and by~$P_2$ during~$[a_2,b_2]$. 
Now, we can formally define our problem.
\problemdef{\TempDisjointPaths}{A temporal graph~$\TGcompact$ 
and a multiset~$S$ of
source-sink pairs containing elements from $V\times V$.}
{Are there pairwise
temporally non-intersecting %
temporal $(s_i,z_i)$-paths for all~$(s_i,z_i)\in S$?}
Analogously, \TempDisjointWalks{} gets the same input but asks whether
there are pairwise
temporally non-intersecting %
temporal $(s_i,z_i)$-walks for all~$(s_i,z_i)\in S$.
From the NP-hardness of \textsc{Disjoint Paths}~\cite{karp1975computational}, we
immediately get the following.
\begin{observation}
\NonStrVars\ is NP-hard even if $T=1$.
\end{observation}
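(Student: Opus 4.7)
The plan is to give a trivial reduction from the classical (static) \textsc{Disjoint Paths} problem, which Karp~\cite{karp1975computational} showed to be NP-hard. Given an instance $(G=(V,E),S)$ of static \textsc{Disjoint Paths}, I would construct the temporal graph $\TG=(V,(E_1))$ with lifetime $T=1$ and $E_1 \coloneqq E$, keeping the same multiset $S$ of source-sink pairs.

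The core step is to observe that, when $T=1$, the temporal setting collapses to the static one. Any temporal walk $P=((v_{i-1},v_i,t_i))_{i=1}^k$ must satisfy $t_i=1$ for all $i\in[k]$, so its underlying edge sequence is precisely a walk in $G$; conversely, every static walk in $G$ lifts canonically to a temporal walk by assigning all time-stamps $1$. Moreover, each vertex visited by $P$ is visited during the interval $[1,1]$, so two temporal walks temporally intersect if and only if they share a common vertex. Hence temporal disjointness coincides with ordinary vertex-disjointness, and the path versions coincide directly.

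For the walk version I would additionally note that in a static graph any vertex-disjoint collection of walks can be shortened to a vertex-disjoint collection of paths (by repeatedly deleting any cycle that a walk traverses between two visits to the same vertex, which can only decrease the set of visited vertices), so \TempDisjointWalks{} with $T=1$ is also equivalent to static \textsc{Disjoint Paths}.

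There is essentially no obstacle here: the entire content is the correspondence between $T=1$ temporal walks/paths and their static counterparts, together with the fact that intervals of the form $[1,1]$ always overlap. The NP-hardness of both variants at $T=1$ then follows immediately from Karp's result.
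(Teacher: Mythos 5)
Your proposal is correct and matches the paper's (implicit) argument exactly: the paper derives this observation immediately from the NP-hardness of static \textsc{Disjoint Paths}, which is precisely the $T=1$ collapse you spell out. The extra remark that vertex-disjoint walks can be shortcut to vertex-disjoint paths correctly covers the walk variant as well.
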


 For an instance $(\TGcompact, S)$ of \AllProbs{} 
 we assume throughout this paper that in the input, 
 $\TG$ is given by the vertex set $V$ followed by the ordered (by time label) 
 subsequence of $E_1,\dots,E_\lifetime$, only containing the non-empty edge sets. 
 To make the presentation simpler, we apply a linear-time preprocessing step to the input 
 by renumbering these non-empty edge sets $E_i$ (still keeping their relative order the same) 
 such that all non-empty sets are consecutive. 
 Note that this preprocessing step creates an instance that is equivalent to the original input instance. 
 Therefore, without loss of generality we assume in the remainder of the paper that all edge sets $E_i$, $i=1,2,\ldots,T$, are non-empty 
 (where the new lifetime $T$ is now the number of non-empty edge sets in the original input).
Hence, the \emph{size} of $\TG$ is $|\TG| \coloneqq |V| + \sum_{t=1}^\lifetime |E_t|$. 

\paragraph{Parameterized complexity.} 
Let~$\Sigma$ denote a
finite alphabet.
A parameterized problem~$L\subseteq \{(x,k)\in \Sigma^*\times \mathbb N_0\}$ is a subset of all instances~$(x,k)$ from~$\Sigma^*\times \mathbb N_0$,
where~$k$ denotes the \emph{parameter}.
A parameterized problem~$L$ is 
\begin{inparaenum}[(i)]
\item FPT (\emph{fixed-parameter tractable}) if there is an algorithm that decides every instance~$(x,k)$ for~$L$ in~$f(k)\cdot |x|^{O(1)}$ time, and 
 \item contained in the class XP if there is an algorithm that decides every instance~$(x,k)$ for~$L$ in~$|x|^{f(k)}$ time.  %
\end{inparaenum}
where~$f$ is any computable function only depending on the parameter.
If a parameterized problem $L$ is W[1]-hard, then it is presumably not
fixed-parameter tractable. 
We refer to \citet{DF13} for more details.

\section{The case of few source-sink pairs}\label{sec:fewpairs}
In this section, we study the computational 
complexity of \AllProbs\ for the case
that the size of the multiset~$S$ of 
source-sink pairs is small. We start by showing
that \TempDisjointPaths{} is NP-hard even for two sink-source pairs. This is a
similar situation as for finding vertex-disjoint paths in directed static graphs,
which is also NP-hard for two paths~\cite{fortune1980directed}. However,
in the temporal setting there is a surprising difference between finding walks and
paths that does not have an analogue in the static setting. We show that
\WalkVersion\ is W[1]-hard for the number $|S|$ of source-sink
pairs and is contained in XP for the same parameter.

\begin{restatable}{theorem}{thmnph}\label{thm:NP-h-for-3-path}
		\TempDisjointPaths{} is NP-hard 
		even if $|S|=2$ and $\lifetime=3$.
\end{restatable}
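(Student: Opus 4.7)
The plan is to reduce from a known NP-hard problem, most naturally \textsc{2-Disjoint Paths} in directed graphs, which is NP-hard even for two source--sink pairs~\cite{fortune1980directed}. The central idea is to exploit the non-decreasing nature of time labels along a temporal path to simulate edge directions: each directed arc $(u,v)$ is replaced by a small temporal ``arrow'' subdivision that can only be traversed from $u$ to $v$. Given an instance $(D, (s_1,z_1), (s_2,z_2))$ of directed 2-DP, I would construct a temporal graph $\TG$ with $\lifetime = 3$ in which each arc-gadget uses fresh auxiliary vertices together with strictly increasing time labels, and the source--sink pairs of $\TG$ are identical to those of $D$.

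For correctness, the forward direction is straightforward: any two vertex-disjoint directed paths in $D$ lift to two temporal paths in $\TG$ that visit the same original vertices (and otherwise pairwise disjoint auxiliary vertices), hence are temporally disjoint. The more delicate direction is to show that any two temporally disjoint paths in $\TG$ project back to two vertex-disjoint directed paths in $D$. Here I would argue that monotonicity of time labels forbids any backward traversal of an arrow gadget, so each projection is a genuine directed walk (and in fact a directed path) in $D$, and that the temporal disjointness condition on original vertices directly yields vertex disjointness in~$D$.

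The main obstacle is fitting the construction into the hard bound $\lifetime = 3$: a naive direction-encoding would want one time label per level of a topological sort of $D$, which is unbounded. To overcome this, I would either start from an NP-hard restriction of directed 2-DP to DAGs of constant depth, or design arrow gadgets that reuse the three available labels without creating monotonicity conflicts along any single temporal path (for instance by splitting the construction into an ``early'' half for the first target pair and a ``late'' half for the second, meeting only on shared blocking vertices). If this turns out to be too delicate, a backup route is a reduction from \textsc{3-SAT}: use the two temporal paths to encode, respectively, a truth assignment and a per-clause literal choice; place variable gadgets in the earliest time layer and clause-verification gadgets in the latest time layer; and arrange shared ``literal'' vertices whose visit intervals are forced to overlap exactly when the assignment fails to satisfy the chosen literal, thereby producing a temporal conflict precisely on inconsistent configurations.
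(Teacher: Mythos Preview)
Your primary route via directed \textsc{2-Disjoint Paths} has a real gap at exactly the point you flag. The ``constant-depth DAG'' escape does not exist: \textsc{2-Disjoint Paths} on DAGs is polynomial-time solvable (indeed $k$-disjoint paths on DAGs is in XP~\cite{slivkins2010parameterized}, and for $k=2$ even the original Fortune--Hopcroft--Wyllie paper gives a polynomial algorithm on acyclic inputs). So there is no NP-hard constant-depth source to start from. The alternative of reusing the three labels inside arrow gadgets is also problematic: the whole point of the arrow is that the label on the $u$-side is strictly smaller than the label on the $v$-side, and once two consecutive arrows along a directed path must both fit into $\{1,2,3\}$ you are back to needing a depth-$O(1)$ DAG. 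In short, the directed-graph reduction does not reach $\lifetime=3$.

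Your backup SAT idea is the right instinct and is what the paper does, but the decomposition differs in an important way. You propose that one temporal path encodes the assignment and the \emph{other} encodes the per-clause literal choice, with consistency enforced by temporal intersections at shared literal vertices. With only three layers it is hard to force two otherwise independent paths to collide at a literal vertex precisely on an inconsistency; you have not said what makes their visit intervals overlap. The paper instead puts \emph{both} roles onto a single path $P_1$: in layer~1 it threads through variable gadgets choosing one side per variable, and in layer~3 it threads through clause gadgets, where each step $C_i\to C_{i+1}$ must reuse one of the literal vertices. Consistency then comes for free from the path (not walk) constraint---$P_1$ cannot revisit a literal vertex it already used in layer~1. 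The second pair $(s_2,z_2)$ is merely a ``clock'': the unique temporal $(s_2,z_2)$-path occupies layer~2 over the variable gadgets, so $P_1$ is forced to finish its assignment phase entirely in layer~1 and cannot stall. This is the missing mechanism in your sketch; adopting it (single path for assignment+verification, second path as a timing obstacle) is what makes $\lifetime=3$ achievable.
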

\begin{proof}%
		We show that \TempDisjointPaths{} is NP-hard 
		even if $|S|=2$ and $\lifetime=3$ by a polynomial-time 
		reduction from the NP-complete 
		\textsc{Exact $(3,4)$-SAT} problem~\cite{Tov84}. 
		\textsc{Exact $(3,4)$-SAT} asks whether 
		a Boolean formula~$\phi$ is satisfiable, 
		assuming that it is in conjunctive normal form, 
		each clause has exactly three literals, 
		and each variable appears in exactly four clauses.
\begin{figure}
		\centering
	\begin{tikzpicture}[line width=1pt, scale=.32, yscale=1]
			\begin{scope}
			\draw[thin,dashed] (-4,-2) -- (21,-2);
					\node at (-4,0) {$G_1$:};
			\node[vert2,label=below:$s_2$] (s2) at (-1,0) {};
			\node[vert2,label=$s_1$] (s1) at (0,0) {};
			\begin{scope}[xshift=1cm]
			\node[vert2] (a1) at (0,0) {};
			\node[vert2,] (x1) at (1,1) {};
			\node[vert2,] (x2) at (2,1) {};
			\node[vert2,] (x3) at (3,1) {};
			\node[vert2,] (x4) at (4,1) {};

			\node[vert2,] (nx1) at (1,-1) {};
			\node[vert2,] (nx2) at (2,-1) {};
			\node[vert2,] (nx3) at (3,-1) {};
			\node[vert2,] (nx4) at (4,-1) {};
		\end{scope}

		\begin{scope}[xshift=6cm]
				\node[vert2] (a2) at (0,0) {};
			\node[vert2,] (y1) at (1,1) {};
			\node[vert2,] (y2) at (2,1) {};
			\node[vert2,] (y3) at (3,1) {};
			\node[vert2,] (y4) at (4,1) {};

			\node[vert2,] (ny1) at (1,-1) {};
			\node[vert2,] (ny2) at (2,-1) {};
			\node[vert2,] (ny3) at (3,-1) {};
			\node[vert2,] (ny4) at (4,-1) {};
		\end{scope}
		\begin{scope}[xshift=11cm]
			\node[vert2] (a3) at (0,0) {};
			\node[vert2,] (z1) at (1,1) {};
			\node[vert2,] (z2) at (2,1) {};
			\node[vert2,] (z3) at (3,1) {};
			\node[vert2,] (z4) at (4,1) {};

			\node[vert2,] (nz1) at (1,-1) {};
			\node[vert2,] (nz2) at (2,-1) {};
			\node[vert2,] (nz3) at (3,-1) {};
			\node[vert2,] (nz4) at (4,-1) {};
		\end{scope}
			\node[vert2] (a4) at (16,0) {};
		\draw (s1) -- (a1);
		\draw (a1) -- (x1) -- (x2) -- (x3) -- (x4) -- (a2) -- (nx4) -- (nx3) -- (nx2) -- (nx1) -- (a1) ;
		\draw (a2) -- (y1) -- (y2) -- (y3) -- (y4) -- (a3) -- (ny4) -- (ny3) -- (ny2) -- (ny1) -- (a2) ;
		\draw (a3) -- (z1) -- (z2) -- (z3) -- (z4) -- (a4) -- (nz4) -- (nz3) -- (nz2) -- (nz1) -- (a3) ;

		\draw[dashed] (a4) -- (17,1);
		\draw[dashed] (a4) -- (17,-1);

		\node[vert2] (a5) at (19,0) {};

		\draw[dashed] (a5) -- (18,1);
		\draw[dashed] (a5) -- (18,-1);

		\node[vert2,label=right:$z_2$] (t_2) at (20,0) {};
		\node[vert2,label=right:$s'$] (sx) at (20,-1) {};
		\draw (a5) -- (sx);
		\end{scope}

		\begin{scope}[yshift=-4cm]
			\draw[thin,dashed] (-4,-2) -- (21,-2);
					\node at (-4,0) {$G_2$:};
			\node[vert2,label=below:$s_2$] (s2) at (-1,0) {};
			\node[vert2,label=$s_1$] (s1) at (0,0) {};
			\begin{scope}[xshift=1cm]
			\node[vert2] (a1) at (0,0) {};
			\node[vert2,] (x1) at (1,1) {};
			\node[vert2,] (x2) at (2,1) {};
			\node[vert2,] (x3) at (3,1) {};
			\node[vert2,] (x4) at (4,1) {};

			\node[vert2,] (nx1) at (1,-1) {};
			\node[vert2,] (nx2) at (2,-1) {};
			\node[vert2,] (nx3) at (3,-1) {};
			\node[vert2,] (nx4) at (4,-1) {};
		\end{scope}

		\begin{scope}[xshift=6cm]
			\node[vert2] (a2) at (0,0) {};
			\node[vert2,] (y1) at (1,1) {};
			\node[vert2,] (y2) at (2,1) {};
			\node[vert2,] (y3) at (3,1) {};
			\node[vert2,] (y4) at (4,1) {};

			\node[vert2,] (ny1) at (1,-1) {};
			\node[vert2,] (ny2) at (2,-1) {};
			\node[vert2,] (ny3) at (3,-1) {};
			\node[vert2,] (ny4) at (4,-1) {};
		\end{scope}
		\begin{scope}[xshift=11cm]
			\node[vert2] (a3) at (0,0) {};
			\node[vert2,] (z1) at (1,1) {};
			\node[vert2,] (z2) at (2,1) {};
			\node[vert2,] (z3) at (3,1) {};
			\node[vert2,] (z4) at (4,1) {};

			\node[vert2,] (nz1) at (1,-1) {};
			\node[vert2,] (nz2) at (2,-1) {};
			\node[vert2,] (nz3) at (3,-1) {};
			\node[vert2,] (nz4) at (4,-1) {};
		\end{scope}
			\node[vert2] (a4) at (16,0) {};
			\draw (s2) -- (nx1) -- (nx2) -- (nx3) -- (nx4) -- 
			(ny1) -- (ny2) -- (ny3) -- (ny4) --
			(nz1) -- (nz2) -- (nz3) -- (nz4);
			\draw (a1) edge[bend right=62] (18,-1);
			\draw[dashed] (17,-1) -- (18,-1);
			\draw[dashed] (nz4) -- (16,-1);
		\draw (a1) -- (x1) -- (x2) -- (x3) -- (x4) -- (a2) ;
		\draw (a2) -- (y1) -- (y2) -- (y3) -- (y4) -- (a3)  ;
		\draw (a3) -- (z1) -- (z2) -- (z3) -- (z4) -- (a4)  ;

		\draw[dashed] (a4) -- (17,1);

		\node[vert2] (a5) at (19,0) {};

		\draw[dashed] (a5) -- (18,1);

		\node[vert2,label=right:$z_2$] (t2) at (20,0) {};
		\node[vert2,label=right:$s'$] (sx) at (20,-1) {};
		\draw (a5) -- (t2);
		\end{scope}

		\begin{scope}[yshift=-11.5cm]
					\node at (-4,0) {$G_3$:};
			\node[vert2,label=below:$s_2$] (s2) at (-1,0) {};
			\node[vert2,label=$s_1$] (s1) at (0,0) {};
			\begin{scope}[xshift=1cm]
			\node[vert2] (a1) at (0,0) {};
			\node[vert2,] (x1) at (1,1) {};
			\node[vert2,] (x2) at (2,1) {};
			\node[vert2,] (x3) at (3,1) {};
			\node[vert2,] (x4) at (4,1) {};

			\node[vert2,] (nx1) at (1,-1) {};
			\node[vert2,] (nx2) at (2,-1) {};
			\node[vert2,] (nx3) at (3,-1) {};
			\node[vert2,label={[label distance=-4pt]above:$x_1^4$}] (nx4) at (4,-1) {};

		\end{scope}

		\begin{scope}[xshift=6cm]
			\node[vert2] (a2) at (0,0) {};
			\node[vert2,] (y1) at (1,1) {};
			\node[vert2,] (y2) at (2,1) {};
			\node[vert2,] (y3) at (3,1) {};
			\node[vert2,] (y4) at (4,1) {};

			\node[vert2,] (ny1) at (1,-1) {};
			\node[vert2,] (ny2) at (2,-1) {};
			\node[vert2,label={[label distance=-4pt]above:$x_2^3$}] (ny3) at (3,-1) {};
			\node[vert2,] (ny4) at (4,-1) {};
		\end{scope}
		\begin{scope}[xshift=11cm]
			\node[vert2] (a3) at (0,0) {};
			\node[vert2,] (z1) at (1,1) {};
			\node[vert2,] (z2) at (2,1) {};
			\node[vert2,label={[label distance=-2pt]above:$\overline{x_2}^3$}] (z3) at (3,1) {};
			\node[vert2,] (z4) at (4,1) {};

			\node[vert2,] (nz1) at (1,-1) {};
			\node[vert2,] (nz2) at (2,-1) {};
			\node[vert2,] (nz3) at (3,-1) {};
			\node[vert2,] (nz4) at (4,-1) {};
		\end{scope}
			\node[vert2] (a4) at (16,0) {};

		\node[vert2] (a5) at (19,0) {};

		\node[vert2,label=right:$z_2$] (t2) at (20,0) {};
		\node[vert2,label=right:$s'$] (sx) at (20,-1) {};

		\begin{scope}[yshift=-1cm]
		\node[vert2,label=below:$C_1$] (C1) at (18,-3) {};
		\node[vert2,label=below:$C_{m+1}$] (Cm) at (0,-3) {};
		\draw (C1) -- (sx);
		\draw[dashed] (C1) -- (17,-1.5);
		\draw[dashed] (C1) -- (16,-1.5);
		\draw[dashed] (C1) -- (15,-2);

		\draw[dashed] (Cm) -- (1,-1.5);
		\draw[dashed] (Cm) -- (2,-1.5);
		\draw[dashed] (Cm) -- (3,-2);

		\node[vert2,label=below:$C_i$] (Ci) at (11,-3) {};
		\draw[dashed] (Ci) -- (12,-1.5);
		\draw[dashed] (Ci) -- (13,-1.5);
		\draw[dashed] (Ci) -- (14,-2);

		\node[vert2,label=below:$C_{i+1}$] (Ci1) at (7,-3) {};
		\draw[dashed] (Ci1) -- (6,-1.5);
		\draw[dashed] (Ci1) -- (5,-1.5);
		\draw[dashed] (Ci1) -- (4,-2);

		\draw (Ci) -- (z3) -- (Ci1);
		\draw (Ci) -- (ny3) -- (Ci1);
		\draw (Ci) -- (nx4) -- (Ci1);

		\node[vert2,label=left:$z_1$] (t1) at (-2,-3) {};
		\draw (Cm) -- (t1);
		\end{scope}
		\end{scope}
\end{tikzpicture}

\caption{An excerpt of layers $G_j =(V,E_j), j \in [3]$ of the temporal
graph $\TG$ used in \cref{thm:NP-h-for-3-path}. Clause $C_i = (x_1
\vee x_2 \vee \overline{x_3})$ contains the $4$th appearance of $x_1$ and the $3$rd appearance of $x_2$ and~$x_3$.}
		\label{fig:np-h-k-2}
		\end{figure}
		\paragraph{Construction.}
		Let $\phi$ be an instance of 
		\textsc{Exact $(3,4)$-SAT} with $n$ variables $x_1,x_2,\dots, x_n$ and $m$ clauses.
		We construct an 
		instance~$I = (\TG =(V,(E_1,E_2,E_3)), S=\{ (s_1,z_1),(s_2,z_2) \})$ in the
		following way.
		Intuitively, the first two layers contain the assignment gadget for the
		variables.
		The part of the temporal $(s_1,z_1)$-path $P_1$ which is in the first layer sets all variables.
		The temporal $(s_2,z_2)$-path ensures that $P_1$ does not ``wait'' in a
		variable gadget.
		In the third layer, $P_1$~must go from $s'$ to $z_1$ through all clause
		gadgets.
		Since $P_1$ cannot visit a vertex twice, this validates whether the assignment 
		satisfies $\phi$. \cref{fig:np-h-k-2} depicts the resulting temporal graph.

		The construction is done as follows. For each variable~$x_j$ we construct the 
		\emph{variable gadget} $\TG^{x_j}=(\{a_{x_j},a_{x_{j+1}}\}
		\cup \{{x_j}^i,\overline{{x_j}}^i\mid i \in [4]\},(E^{x_j}_i)_{i\in[2]})$,
		where $E^{x_j}_1 = E^{x_j}_T \cup E^{x_j}_F$ with 
		$E^{x_j}_T = \{ \{a_{x_j},{x_j}^1\},$ $\{{x_j}^4,a_{x_{j+1}}\} \} \cup \{\{{x_j}^i,{x_j}^{i+1}\} \mid i\in[3]\}$ and 
$E^{x_j}_F = \{ \{a_{x_j},{\overline{x_j}}^1\},\{\overline{x_j}^4,a_{x_{j+1}}\} \} \cup \{\overline{x_j}^i,\overline{x_j}^{i+1} \mid i\in[3]\}$,
and $E^{x_j}_2 = E^{x_j}_T \cup \{\{\overline{x_j}^i,\overline{x_j}^{i+1}\} \mid
		i\in[3]\} \cup \{\{\overline{x_j}^4,\overline{x_{j+1}}^{1}\} \mid
j<n\}$. 
		Let $C_i = (\ell_j^f \vee \ell_p^g \vee \ell_q^h)$ be a clause,
		where $\ell_\alpha^\beta$ is a literal of variable $x_\alpha$ and its $\beta$-th appearance
		when iterating the clauses in the order of the indices.
		We now abuse our notation and say $\ell_\alpha^\beta \equiv \overline{x}_\alpha^\beta$ if $\ell_\alpha^\beta$ is a negation of $x_\alpha$ and otherwise we say $\ell_\alpha^\beta \equiv x_\alpha^\beta$.
		We construct the 
		\emph{clause gadget} $\TG^{C_i} := (\{C_i,C_{i+1},\ell_j^f, \ell_p^g, \ell_q^h\},(E^{C_i}_t)_{t\in[3]})$ 
		where $E^{C_i}_1=E^{C_i}_2=\emptyset$ and $E^{C_i}_3 = \{ \{C_r,\ell_j^f\},\{C_r,\ell_p^g\},\{C_r,\ell_q^h\} \mid r \in \{i,i+1\} \}$.
		Now we set $\TG =(V,(E_t)_{t\in[3]})$, where 
		$V=\bigcup_{j=1}^n V(\TG^{x_j})\cup \bigcup_{i=1}^m V(\TG^{C_i}) \cup \{ s_1,s_2, z_1,z_2, s'\}$,
		$E_1 = \bigcup_{j=1}^n E^{x_j}_1 \cup \{\{s_1,a_{x_1}\},\{a_{x_{n+1}},s'\} \}$,
		$E_2 = \bigcup_{j=1}^n E^{x_j}_2 \cup
		\{\{s_2,{x_1}^1\},\{{x_n}^4,a_{x_1}\},\{a_{x_{n+1}},z_2\} \}$, and %
		$E_3 = \bigcup_{i=1}^m E^{C_i}_4 \cup \{ \{s',C_1\},\{z_1,C_{m+1}\} \}$.

		Observe that $I$ can be constructed in polynomial time. 
		\paragraph{Correctness.}
		($\Leftarrow$): Assume $I$ is a \yes-instance.
		Hence, there are non-intersecting $P_1,P_2$ such that $P_i$ is a temporal $(s_i,z_i)$-path, for $i \in[2]$.
		Observe that there is only one unique temporal $(s_2,z_2)$-path in $\TG$. %
		Hence, $P_1$ visits~$s'$, otherwise $P_1$ intersects~$P_2$.
		Moreover, on its way to~$s'$ the temporal path~$P_1$ visits for each variable~$x_i$ 
		either $x_i^1,x_i^2,x_i^3,x_i^4$
		or $\overline{x}_i^1,\overline{x}_i^2,\overline{x}_i^3,\overline{x}_i^4$.
		We set variable $x_i$ to true if and only if $P_1$ visits $\overline{x}_i^1$.
		Now observe that $P_1$ visits 
		all vertices $C_1,C_2,\dots C_{m+1}$ (in that order) at time $3$.
		In particular, the only way from $C_i$ to $C_{i+1}$ via one of literals of $C_i$.
		Since $P_1$ cannot visit a vertex twice, one of the literals is set to true by our assignment of the variables.
		Thus, $\phi$ is satisfiable.

		($\Rightarrow$): Assume $\phi$ is satisfiable.
		We fix one assignment for the variables which satisfy $\phi$.
		Let $P_2$ be the unique temporal $(s_2,z_2)$-path in $\TG$. %
		Our $P_1$ starts with the time edge $(\{s_1,a_{x_1}\},1)$ and 
		to get from $a_{x_1}$ to $a_{x_{n+1}}$ at time one, 
		$P_1$ takes for each $j \in [n]$ the 
		induced path $(V,E_1)[\{a_{x_j},a_{x_{j+1}}\} \cup \{x_j^i \mid i\in[4]\}]$ 
		if $x_j$ is set to false, otherwise $P_1$ takes the 
		induced path $(V,E_1)[\{a_{x_j},a_{x_{j+1}}\} \cup \{\overline{x_j}^i \mid i\in[4]\}]$.
		Afterwards, $P_1$ takes the time edges $(\{a_{x_{n+1}},s'\},1),(\{s',C_1\},1)$.
		Let $i \in [m]$ and $C_i = (\ell_j^f \vee \ell_p^g \vee \ell_q^h)$ be a clause,
		where $\ell_\alpha^\beta$ is a literal of variable $x_\alpha$ and its
		$\beta$-th appearance. To get from $C_i$ to $C_{i+1}$ at time three,
		$P_1$ takes the 
		induced path $(V,E_3)[\{C_i,C_{i+1}, \ell_\alpha^\beta\}]$, where $\ell_\alpha^\beta$ is a satisfied literal in $C_i$.
		Finally, $P_1$ takes the time edge $(\{C_{m+1},z_1\},3)$.
		Note that $P_1$ does not visit a vertex twice, and hence is a temporal $(s_1,z_1)$-path in $\TG$.
		Moreover, it visits $s'$ at times $\{1,2,3\}$ and all other vertices either
		at time one or time three. Note that $P_2$ is a temporal path that
		exclusively visits vertices at time two and does not visit vertex $s'$.
		Thus, $P_1$ is temporally non-intersecting with $P_2$. It follows that $I$ is a
		\yes-instance.
\end{proof}

The reduction behind \cref{thm:NP-h-for-3-path} heavily relies on the fact that
we are dealing with paths. Indeed, for temporally disjoint walks we presumably
cannot obtain NP-hardness for a constant number of sink-source pairs since, as
we will show at the end of this section, \WalkVersion\ can be solved in
polynomial time if the number of source-sink pairs is constant. However, 
before that we show
W[1]-hardness for \WalkVersion\ parameterized by the number $|S|$ of source-sink
pairs, presumably excluding the existence of an FPT-algorithm for this
parameter.

\begin{restatable}{theorem}{thmwh}%
		\label{thm:w1}
\WalkVersion\ is W[1]-hard when parameterized by $|S|$, even if all edges have exactly one time label.
\end{restatable}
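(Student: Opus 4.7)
The plan is to reduce from \textsc{Multicolored Clique}, which is W[1]-hard when parameterized by the number~$k$ of color classes. An instance is a graph~$G$ with a vertex partition $V_1 \cup \cdots \cup V_k$, and we ask whether $G$ contains a clique using exactly one vertex from each~$V_i$. I will build a temporal graph~$\TG$ in which every edge carries exactly one time label, together with a multiset $S=\{(s_i,z_i)\}_{i\in[k]}$ of $k$ source-sink pairs, so that $|S|=k$ is the parameter. Walk~$W_i$ from~$s_i$ to~$z_i$ will encode the selection of a vertex $v_i\in V_i$, and temporal disjointness will forbid exactly those collections of selections containing a non-edge of~$G$.

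The construction has a \emph{selection} phase and a \emph{verification} phase. In the selection phase, at time~$1$ the source~$s_i$ has an edge to a branch vertex $a^{(i)}_v$ for each $v\in V_i$, so $W_i$ commits to a vertex $v_i$ by entering the corresponding branch; the branches are padded with private dummy vertices so that each walk reaches a common hub at the same time. In the verification phase I reserve, for every unordered pair $\{i,j\}$, a disjoint time window $I_{ij}$, and for every non-edge $\{u,v\}$ of~$G$ with $u\in V_i, v\in V_j$ I introduce a \emph{conflict vertex} $c_{u,v}$ together with an associated time $t_{u,v}\in I_{ij}$. Within the sub-gadget for pair $\{i,j\}$, walk~$W_i$, exactly when it selected~$u$, is forced to visit $c_{u,v}$ at time $t_{u,v}$ for every $v\in V_j$ with $\{u,v\}\notin E(G)$; symmetrically for~$W_j$. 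Walks $W_r$ with $r\notin\{i,j\}$ are routed through disjoint ``parking'' structures during $I_{ij}$, so they cannot interfere with the pair sub-gadget.

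The $i$-side of the $\{i,j\}$-sub-gadget consists, for each $u\in V_i$, of a deterministic chain that (once entered) visits exactly the vertices $\{c_{u,v}:v\in V_j,\{u,v\}\notin E(G)\}$ at their prescribed times and otherwise traverses private vertices. Because every edge carries a unique time label and the chains are wired so that at each step only one temporally available edge leaves the current vertex, the trajectory of $W_i$ during the verification phase is fully determined by the selection~$v_i$. Consequently, walks $W_i$ and $W_j$ temporally intersect precisely at $c_{v_i,v_j}$ whenever $\{v_i,v_j\}\notin E(G)$, and have no other shared visits. A family of temporally disjoint walks therefore exists iff the selections $v_1,\ldots,v_k$ induce a multicolored clique in~$G$.

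The main obstacle is ensuring that each walk's trajectory in the verification phase is genuinely forced by its selection and that the only inter-walk overlaps are at the intended conflict vertices. I address this by giving each selection its own disjoint collection of private vertices, shared only at the conflict vertices~$c_{u,v}$, and by assigning time labels so that any deviation from the intended chain requires an edge unavailable at the needed time; checking this amounts to routine but careful label bookkeeping. The resulting temporal graph has polynomial size, uses exactly one label per edge, and keeps $|S|=k$, yielding a parameterized reduction from \textsc{Multicolored Clique} and establishing W[1]-hardness of \WalkVersion{} with respect to~$|S|$.
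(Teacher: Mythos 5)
Your reduction is from \textsc{Multicolored Clique}, whereas the paper reduces from \textsc{Disjoint Paths on DAGs} (W[1]-hard by Slivkins): there, every arc $(u,v)$ is subdivided by an arc-vertex, the edge leaving $x_u$ gets label $2\sigma(u)-1$ and the edge entering $x_v$ gets label $2(\sigma(v)-1)$ for a topological order $\sigma$, so every walk occupies $x_v$ during the same fixed interval and temporal disjointness degenerates to vertex-disjointness of the underlying DAG paths. That route needs almost no gadgetry but leans on a nontrivial external hardness result; yours is self-contained from the canonical W[1]-hard problem, at the price of a selection/verification gadget whose correctness rests on the forcing argument you defer. The forcing itself is sound in spirit: because every edge carries exactly one label, a walk cannot delay its schedule, and since it must eventually reach $z_i$ it must traverse its entire chain and hence hit every prescribed conflict vertex at its prescribed time (back-and-forth oscillations permitted by the non-strict model only add visits to private vertices and cannot be used to skip a conflict vertex).

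There is, however, one point where the write-up as stated would break: you route all selection branches of walk $W_i$ into ``a common hub.'' If the branches for the different $u\in V_i$ genuinely merge at a single vertex $h_i$ before the verification phase, then a walk sitting at $h_i$ may continue along \emph{any} verification chain departing from $h_i$, not necessarily the one matching the branch it entered --- the commitment is forgotten, contradicting your later claim that the verification trajectory is ``fully determined by the selection $v_i$.'' (And if the hub were common to all $k$ walks, they would all temporally intersect there.) The fix is standard and cheap: do not merge; let the branch for $u\in V_i$ continue directly into the concatenation, over all $j\neq i$, of $u$'s verification chains, with the $|V_i|$ parallel chains of walk $i$ meeting only at the sink $z_i$ (padding each chain with private vertices to keep the global timetable consistent). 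With that change, and a consistent assignment of the times $t_{u,v}$ so that each chain visits its conflict vertices in increasing time order, the equivalence with \textsc{Multicolored Clique} goes through as you describe.
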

\begin{proof}
We present a parameterized reduction from \textsc{Disjoint Paths on DAGs}. In
this problem, we are given a DAG $D=(U,A)$ and $\ell$ source-sink pairs
$\{(s_1, t_1), \ldots, (s_\ell, t_\ell)\}\subseteq U\times U$, and we are asked
whether there exist $\ell$ paths $P_i$ from $s_i$ to $t_i$ that are pairwise
vertex-disjoint. This problem parameterized by the number $\ell$ of paths is
W[1]-hard~\cite{slivkins2010parameterized}\footnote{Slivkins~\cite{slivkins2010parameterized}
shows W[1]-hardness for the arc-disjoint version of this problem. However,
there are straightforward (parameterized) reductions between \textsc{Disjoint
Paths on DAGs} and \textsc{Arc-Disjoint Paths on DAGs}.}.

\paragraph{Construction.} Let $(D=(U,A), \{(s_1, t_1), \ldots,
(s_\ell, t_\ell)\}\subseteq U\times U)$ be an instance of \textsc{Disjoint Paths on DAGs}. We first compute a topological
ordering $\sigma$ for the vertices in $U$~\cite[Theorem~4.2.1]{BJG02}. Recall that $\sigma$ is a linear ordering on
the vertices in $U$ with the property that $(u,v)\in A\Rightarrow u <_\sigma
v$. We denote with $\sigma(u)$ the position of vertex $u$ in the ordering~$\sigma$.

Using this ordering, we construct a temporal graph $\TG$ as follows. We add the
following vertices to $V$:
\begin{itemize}
\item For every vertex $u\in U$ we add a ``vertex-vertex'' $x_u$ to~$V$.
\item For every arc $(u,v)\in A$ we add an ``arc-vertex'' $y_{(u,v)}$ to~$V$.
\end{itemize}
Next, we add the following time edges: For every arc $(u,v)\in A$ we add an
edge from $x_u$ to $y_{(u,v)}$ with time label $2\sigma(u)-1$ and we add an
edge from $y_{(u,v)}$ to $x_v$ with label $2(\sigma(v)-1)$.

Finally, for every source-sink pairs $(s_i, t_i)$ of the \textsc{Disjoint Paths
on DAGs} instance, we add the source-sink pair $(x_{s_i}, x_{t_i})$ to $S$. The
reduction can clearly be performed in polynomial time and the number of
source-sink pairs in the constructed instance is the same as the number of
source-sink pairs in the given \textsc{Disjoint Paths on DAGs} instance.

\paragraph{Correctness.} We now show that the given \textsc{Disjoint Paths on
DAGs} instance is a \yes-instance if and only if our constructed instance is a
\yes-instance of \WalkVersion.

$(\Rightarrow)$: Assume $(D=(U,A), \{(s_1, t_1), \ldots, (s_\ell, t_\ell)\}\subseteq
U\times U)$ is a \yes-instance of \textsc{Disjoint Paths on DAGs}. Then there is
a set $\mathcal{P}$ of pairwise vertex-disjoint paths for all source-sink pairs
$\{(s_1, t_1), \ldots, (s_\ell, t_\ell)\}$. Let $P_i=((v_{j-1},v_j))_{j=1}^k\in
\mathcal{P}$ with $v_0=s_i$ and $v_k=t_i$ be the path from $s_i$ to $t_i$. Then,
by construction of $\TG$, we have that 
$Q_i=((x_{s_i}, y_{(s_i,v_1)}, 2\sigma(s_i)-1), (y_{(s_i,v_1)}, x_{v_1}, 2(\sigma(v_1)-1)),$ 
$(x_{v_1}, y_{(v_1,v_2)}, 2\sigma(v_1)-1), \ldots, 
(y_{(v_{k-1},t_i)}, x_{t_i}, 2(\sigma(t_i)-1)))$ is a strict temporal path from $x_{s_i}$ to $x_{t_i}$ in
$\TG$ that alternatingly visits the vertex-vertices and arc-vertices that correspond to the
vertices and arcs visited by $P_i$ in $D$. Since the paths $P_i\in\mathcal{P}$
are pairwise vertex disjoint (and hence also arc-disjoint), it is clear that the
temporal paths $\{Q_i\}_{i=1}^\ell$ are vertex-disjoint, and hence also
temporally disjoint.

$(\Leftarrow)$: 
Note that a 
temporal walk can visit vertices multiple times. But there are only
two ways in which a temporal walk $W$ visits a vertex $x_{v_j}$
multiple times: either $W$ contains subwalk 
$((x_{v_j}, y_{(v_j,v_{j'})}, 2\sigma(v_j)-1),(y_{(v_j,v_{j'})},x_{v_j}, 2(\sigma(v_j)-1)))$ for
some $j'$, or $W$ contains the subwalk 
$((x_{v_j}, y_{(v_j,v_{j'})},2\sigma(v_{j'})-1),$ $(y_{(v_j,v_{j'})},x_{v_j},2(\sigma(v_{j'})-1)))$
for some $j'$. In both cases we can remove the subwalk from $W$ and still obtain
a temporal walk from the same starting vertex to the same end vertex
than $W$. We can remove subwalks of this kind repeatedly until we obtain a
temporal path. This observations above allow us to assume that if we face a
\yes-instance of \WalkVersion, then there is a set $\mathcal{P}$ of pairwise
temporally disjoint \emph{paths} for all source-sink pairs in~$S$.

Now observe
that all vertices in $\TG$ are $x_{v_j}$ incident with time edges of exactly two
time labels: $2\sigma(v_j)-1$ and $2(\sigma(v_{j})-1)$. Hence, our paths ``enter'' a vertex-vertex $x_{v_j}$ with a time edge that has time
label $2\sigma(v_j)-1$ and ``leave''~$x_{v_j}$ with a time edge that has time
label $2(\sigma(v_{j})-1)$. It follows that no two of the temporally
disjoint paths in $\mathcal{P}$ visit the same vertex-vertex $x_{v_j}$. The same
holds for arc-vertices~$y_{(v_j,v_{j'})}$ since they are only incident with two
time edges. It follows that the temporal paths in $\mathcal{P}$ are pairwise
vertex-disjoint. Furthermore, by the construction of $\TG$, they alternatingly
visit vertex-vertices and arc-vertices, where the latter correspond to arcs 
in~$D$ that connect the vertices corresponding to the vertex-vertices visited
directly before and after. It follows that we can translate temporal paths in
$\mathcal{P}$ directly to a set of pairwise vertex-disjoint paths in $D$ that
connect the corresponding source-sink pairs of the \textsc{Disjoint Paths on
DAGs} instance.
\end{proof}

We now show that \WalkVersion{} is in XP for the parameter number
$|S|$ of source-sink pairs.

\begin{restatable}{theorem}{thmxp}%
		\label{thm:xp}
		\WalkVersion{} is in the class XP when parameterized by $|S|$, 
		as it can be solved in $O(|V|^{2|S|+2}\cdot\lifetime)$ time if $|S|$ is a fixed constant.
\end{restatable}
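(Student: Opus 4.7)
The plan is a dynamic programming algorithm over \emph{configurations}. A configuration is a tuple $C = (p_1, \ldots, p_{|S|}) \in V^{|S|}$ with pairwise distinct entries, recording for each walk which vertex it is currently waiting at. The distinctness constraint captures that two walks sitting at the same vertex during the same integer time would temporally intersect, so there are $O(|V|^{|S|})$ valid configurations. Crucially for the walk version, the configuration does not need to remember the history of vertices already visited by each walk, since walks may freely revisit vertices at different times; this is exactly what separates the XP result here from the NP-hardness of the path version shown in \cref{thm:NP-h-for-3-path}, where such a history would blow up the state space.

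For every $t \in [0,T]$ I compute $R_t \subseteq V^{|S|}$, the set of configurations reachable at time $t$: those for which there exist pairwise temporally non-intersecting walks using only time labels in $[1,t]$ moving walk $j$ from $s_j$ to $p_j$. Start from $R_0 = \{(s_1,\ldots,s_{|S|})\}$, and for $t = 1, \ldots, T$ put $C' = (p_1',\ldots,p_{|S|}') \in R_t$ iff some $C \in R_{t-1}$ admits a \emph{feasible} transition $C \to C'$ via layer $E_t$. Accept iff $(z_1,\ldots,z_{|S|}) \in R_t$ for some $t$. A transition $C \to C'$ via $E_t$ is feasible iff there exist walks $W_j$ in the static graph $G_t = (V, E_t)$ from $p_j$ to $p_j'$ (the trivial walk $(p_j)$ when $p_j = p_j'$) whose vertex sets $V(W_j)$ are pairwise disjoint. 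Since every vertex of $W_j$ is visited at the single time $t$, temporal disjointness inside the layer coincides with vertex-disjointness in $G_t$; and since a walk can be shortened to a path on the same vertex set, the feasibility check reduces to the classical $|S|$-Disjoint Paths problem in $G_t$, which is polynomial-time solvable for fixed $|S|$.

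Correctness follows by a direct bidirectional translation: any valid set of temporally disjoint walks induces a sequence of configurations $C_0, C_1, \ldots, C_T$ with consecutive feasible transitions, and, conversely, any such sequence can be stitched back into temporally disjoint walks of $\mathcal G$ by concatenating the per-layer walks $W_j$ with waiting periods at the configurations. For the running time: there are $|V|^{|S|}$ candidate configurations and $|V|^{|S|}$ candidate successors per configuration, each feasibility check is performed in $O(|V|^2)$ time (for fixed $|S|$, absorbing the $|S|$-dependent constants of the subroutine for $k$-Disjoint Paths on $G_t$), and there are $T$ layers, giving the claimed $O(|V|^{2|S|+2}\cdot T)$ total bound. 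The main obstacle is precisely this per-transition feasibility check: because walks are permitted to traverse multiple edges within a single time label, the check is not just a local adjacency test but a full disjoint-paths computation inside $G_t$, and verifying that this fits the $O(|V|^2)$ budget per transition for fixed $|S|$ is the core technical ingredient.
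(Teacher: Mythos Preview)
Your overall strategy---dynamic programming over tuples in $V^{|S|}$ with a per-layer \textsc{Disjoint Paths} check as the transition---is exactly the paper's approach. However, your insistence that configurations have \emph{pairwise distinct} entries is a genuine error that breaks correctness.

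The problem is that, by definition, a temporal walk visits its source $s$ only during the interval $[t_1,t_1]$ (the time of its first edge) and its sink $z$ only during $[t_k,t_k]$ (the time of its last edge). Hence a walk that has not yet departed does \emph{not} block its source, and a walk that has already arrived does \emph{not} block its sink. Your configuration model, by contrast, parks every walk at its source from time~$0$ until departure and at its sink from arrival until time~$T$, thereby forbidding other walks from touching those vertices in the meantime. A small counterexample: take $s_1=a$, $z_1=c$, $s_2=b$, $z_2=d$, with time edges $(\{a,b\},1)$, $(\{b,c\},2)$, $(\{b,d\},3)$. The walks $((a,b,1),(b,c,2))$ and $((b,d,3))$ are temporally disjoint, yet in your model walk~$2$ sits at $b$ throughout times $0,1,2$ and prevents walk~$1$ from ever entering~$b$; your DP never reaches $(c,d)$ and wrongly rejects. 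A symmetric failure occurs when a walk must pass through another walk's sink after the latter has arrived. Moreover, since $S$ is a \emph{multiset}, two source--sink pairs may share a source or a sink outright, so even the initial tuple $(s_1,\ldots,s_{|S|})$ or the target $(z_1,\ldots,z_{|S|})$ need not have distinct entries; your $R_0$ could be empty from the start.

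The fix (and this is precisely what the paper does) is to drop the global distinctness requirement and instead allow $v_i=v_j$ for $i\neq j$ provided at least one of the two walks is ``inactive'': either still at its source having never left, or already at its sink having arrived at a strictly earlier time. Inactive walks are then \emph{excluded} from the \textsc{Disjoint Paths} instance used in the layer-$t$ transition. This bookkeeping is the only real subtlety in the proof; once it is in place, your running-time analysis goes through unchanged.
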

\begin{proof}%
		Consider an instance
		$I=(\TG,S=\{(s_1,z_1),(s_2,z_2),\dots,(s_k,z_k)\})$ of \TempDisjointWalks{}.
		We use the following dynamic programming table~$D$ with Boolean entries.
		Intuitively, we want that for all $t \in \{1,\dots,T\}$ and $v_1,\dots,v_k
		\in V$ we have that $D[t,v_1,\dots,v_k] = \top$
		if and only if
		there are temporally non-intersecting temporal $(s_i,v_i)$-walks
		$P_1,\dots,P_k$ with arrival time $t_i\leq t$. However, for technical
		reasons, we have slightly stronger requirements for $D$. First of all, we have
		a ``dummy'' time label zero that we use to encode the sources in the dynamic
		program. Formally, we initialize $D$ as follows:
		
		For all $v_1,\dots v_k \in V$ we have that
		\begin{align*}
				D[0,v_1,\dots,v_k] \coloneqq
				\begin{cases}
						\top, & \text{if }\forall i\in[k]\colon v_i=s_i\\
						\bot, & \text{otherwise.}
				\end{cases}
		\end{align*}

		Furthermore, we have to model that the temporal walks we are looking for do
		not have to start immediately at their respective sources. Hence, if a
		temporal walk is still ``waiting'' at its source, the source vertex is not
		``blocked'' for other temporal walks. We have a symmetric situation if temporal
		walks already arrived at their respective sink. In other words, if we have an
		entry $D[t,v_1,\dots,v_k]$ with $v_i=v_j$ for some $i\neq j$, then it is a
		necessary condition for $D[t,v_1,\dots,v_k]=\top$ that at least one of the two
		temporal walks $i,j$ is either still waiting at its source or already arrived
		at its sink. In the latter case, we additionally need that the temporal walk
		arrived at the sink in a previous time step, otherwise the sink would still be
		blocked. We now look up in $D$ whether there all these conditions are met for
		a set of temporally disjoint walks that arrive at some vertices at time $t-1$ such that they can be
		extended in time step $t$ to reach the vertices $v_1,\dots,v_k$.
		
		Formally, for all $t\in
		[\lifetime]$ we have that
		\begin{align*}
				D[t,v_1,\dots,v_k] \coloneqq 
				\begin{cases}
						\top, & \begin{array}{l}
								\text{if }\forall i,j\in[k],v_i=v_j\colon\\
								\ \ \exists p\in\{i,j\}\colon v_p \in \{s_p,z_p\},\\
									\exists u_1,\dots,u_k \in V \colon\\
									\ \ (\forall i,j\in[k],i\not=j,v_i=z_i\colon \\
									\ \ \ \  u_i = z_i \vee v_i\not=v_j) \text{ and}\\
									\ \ D[t-1,u_1,\dots,u_k]=\top \text{ and}\\
									\ \ ((V,E_t), \{ (u_i,v_i) \mid i\in[k],\\
									\ \ \{u_i,v_i\}\not=\{z_i\},v_i \not=s_i\})
									\ \ \text{is a}\\
									\ \ \text{\yes-instance of \textsc{Disj.~Paths},}
								\end{array}\\ 
						\bot, &\text{otherwise,}
				\end{cases}
		\end{align*}
		Here, \textsc{Disj.~Paths} is the \textsc{Disjoint Paths} problem, 
		where we are given an undirected graph $G$ and set of $k$ terminal pairs $\{(s_i',z_i') \mid i \in [k]\}$
		and ask whether there are $k$ vertex-disjoint paths $P_1,\dots,P_k$ in $G$ 
		such that $P_i$ is an $(s_i',z_i')$-path 
		and $P_i$ and $P_j$ are vertex-disjoint for all $i \in [k]$ and $j\in[k] \setminus \{i\}$.
		Note that $s_i'=z_i'$ is a valid input and 
		that in this case $s_i'$ is the only vertex on an $(s_i',z_i')$-path.
		We report that $I$ is a \yes-instance if and only if $D[T,z_1,\dots,z_k] = \top$.
		\paragraph{Correctness.}We show by induction that for all $t \in \{0,1,\dots,T\}$ and $v_1,\dots,v_k \in V$ we have that 
		$D[t,v_1,\dots,v_k] = \top$
		if and only if
		there are temporally non-intersecting temporal walks $P_1,\dots,P_k$ 
		such that 
		(a) $P_i$ is a temporal $(s_i,v_i)$-walk in $\TG$ with arrival time $t_i\leq t$, and
		(b) $\forall i,j \in [k],v_i\not=v_j\colon \exists p\in\{i,j\}\colon v_p=s_p \vee (v_p=z_p \wedge t_p<t)$ 
		is true.

		Here, we say that an empty edge list $(\emptyset)$ is a temporal $(v,v)$-walk with 
		arrival time $0$ which does not visit any vertex, for all $v \in V$.
		Hence, for $t=0$ this claim is true by the definition of $D$.
		Now let $t \in [T]$ and assume that this claim holds true for all $t'<t$.

		($\Leftarrow$): Assume that there are temporally 
		non-intersecting temporal walks $P_1,\dots,P_k$ 
		such that (a) and (b) hold true.
		Let $K_s:=\{ i \in [k] \mid s_i=v_i\}$, $K_z:=\{ i \in [k] \mid z_i=v_i, t_i < t-1 \}$, and $K:=[k]\setminus(K_s \cup K_z)$.	
		We set $u_i = s_i$, for all $i\in K_s$.
		We set $u_i = z_i$, for all $i\in K_z$.
		For all $i \in K$, 
		let $u_i$ be the first vertex which is visited at time~$t$ (in the visiting order),
		or $v_i$ if $P_i$ does not visit any vertex at time $t$.
		Observe that $\forall i,j\in[k],v_i=v_j\colon\exists p\in\{i,j\}\colon v_p \in \{s_p,z_p\}$ 
		and $\forall i,j\in[k],i\not=j,v_i=z_i\colon u_i = z_i \vee v_i\not=v_j$ are both true.
		Observe for all $i \in K$ that we can split $P_i$ into two parts $P_i^1$ and $P_i^2$, where
		$P_i^1$ is a (possibly empty) temporal $(s_i,u_i)$-walk in $\TG$ with arrival time at most $t-1$ and
		$P_i^2$ is %
		a (possibly empty) $(u_i,v_i)$-path in the graph $(V,E_t)$. 
		Note that $\{ P_i^2 \mid i\in K\}$ is a solution
		for the \textsc{Disjoint Path} instance $((V,E_t),\{ (u_i,v_i) \mid i\in[k],\{u_i,v_u\}\not=\{z_i\},v_i \not=s_i\})$.
		We set $P_i^1= (\emptyset)$ for all $i \in K_s$, and $P_i^1 = P_i$ for all $i \in K_z$.
		Clearly, $P_1^1,\dots,P_k^1$ are temporally non-intersecting such that
		(a) $P_i$ is a temporal $(s_i,u_i)$-walk in $\TG$ with arrival time $t_i\leq t-1$, and
		(b) $\forall i,j \in [k],u_i\not=u_j\colon \exists p\in\{i,j\}\colon u_p=s_p \vee (u_p=z_p \wedge t_p<t-1)$.
		Hence, $D[t-1,u_1,\dots,u_k] = \top$ and thus by definition of $D$ we have $D[t,v_1,\dots,v_k] = \top$.

		($\Rightarrow$): Assume that $D[t,v_1,\dots,v_k] = \top$.
		By definition of $D$, we know that there are vertices $u_1,\dots, u_k$ such that 
		(i) $\forall i,j\in[k],v_i=v_j\colon\exists p\in\{i,j\}\colon v_p \in \{s_p,z_p\}$ 
		(ii) $\forall i,j\in[k],i\not=j,v_i=z_i\colon u_i = z_i \vee v_i\not=v_j$ are true,
		(iii)  $D[t-1,u_1,\dots,u_k] = \top$, and
		(vi) $((V,E_t),\{ (u_i,v_i) \mid i\in[k],\{v_i,u_i\} \not=\{z_i\},v_i \not=s_i\})$ is a \yes-instance of \textsc{Disjoint Path}.
		By assumption and (iii), there are temporally 
		non-intersecting temporal walks $P_1,\dots,P_k$ 
		such that 
		(a) $P_i$ is a temporal $(s_i,u_i)$-walk in $\TG$ with arrival time $t_i\leq t-1$, and
		(b) $\forall i,j \in [k],u_i\not=u_j\colon \exists p\in\{i,j\}\colon u_p=s_p \vee (u_p=z_p \wedge t_p<t-1)$.
		Let $K_s:=\{ i \in [k] \mid s_i=v_i\}$, $K_z:=\{ i \in [k] \mid u_i=z_i=v_i\}$, and $K=[k]\setminus(K_s \cup K_z)$.	
		For all $i \in K_s$ we set $P_i' = (\emptyset)$.
		For all $i \in K_z$ we set $P_i' = P_i$.
		From (vi), there are vertex-disjoint paths $P_1^2,\dots,P_{|K|}^2$ 
		such that $P_i^2$ is a $(u_i,v_i)$-path in $(V,E_t)$, for all $i\in K$.
		For all $i \in K$ we construct a temporal $(s_i,v_i)$-walk $P_i'$ 
		by concatenating $P_i$ and $P_i^2$.
		Note that $P_1',\dots,P_k'$ are temporally non-intersecting  such that
		(a) $P_i'$ is a temporal  $(s_i,v_i)$-walk with arrival time $t_i'\leq t$, for all $i\in [k]$.
		Moreover, if we have for some $i,j\in [k]$ that $v_i=v_j$, 
		then there is a $p \in \{i,j\}$ such that either $v_p = s_p$ or $v_p = z_p$, because of (i).
		If $v_p = z_p$, then $u_p=z_p$ and thus $t_i'<t$, because of (ii).
		Hence, (b) $\forall i,j \in [k],i\not=j\colon v_i\not=v_j \vee v_i=s_i \vee (v_i=z_i \wedge t_i'<t)$ is true.

		So, $I$ is a \yes-instance if and only if $D[T,z_1,\dots,z_k] = \top$.

		\paragraph{Running time.}
		The table size of $D$ is $O(|V|^k\cdot\lifetime)$.
		To compute one entry of $D$ we look at up to $O(|V|^k)$ other entries of~$D$ and 
		then solve an instance of \textsc{Disjoint Paths} in $O(n^2)$ time\footnote{This
		running time bound considers $k$ to be a constant. There is a factor $f(k)$
		for some function $f$ hidden in the $O$-notation.}
		\cite{KAWARABAYASHI2012424}.
		This gives an overall running time of $O(|V|^{2k+2}\cdot\lifetime)$ if $k$ is
		a fixed constant.
\end{proof}

Finally, we point out that \cref{thm:NP-h-for-3-path} implies that for \TempDisjointPaths{} we presumably
cannot achieve a result similar to \cref{thm:xp} while \cref{thm:w1}
implies that we presumably cannot improve this result on a classification level,
that is we cannot expect to find an FPT-algorithm for \WalkVersion\ for
parameter $|S|$.

\section{Temporal lines and trees}\label{sec:unterlyingpath}
In this section, we investigate the computational complexity of \AllProbs\ for
restricted classes of underlying graphs, in particular so-called \emph{temporal
lines} and \emph{temporal trees}. The former are temporal graphs that have a
path as underlying graph and the latter are temporal graphs that have a tree as
underlying graph.
In particular, we first show that,
surprisingly, the problems remain NP-hard on temporal lines (and thus also on temporal trees). 
On the positive side we show that, on temporal trees, 
the \PathVersion{} is fixed-parameter tractable with respect to the
number of source-sink pairs. 
The latter result stands in stark contrast to the general case, 
where the problem is NP-hard even when the number of source-sink pairs is two~(\cref{thm:NP-h-for-3-path}). If we further restrict all source-sink pairs
to consist of the two end-points of the temporal line, then we obtain a
polynomial-time algorithm.

Before we proceed with our results in this section, first we recall some background on foremost temporal paths. 
Given a temporal graph $\TG$ and two specific vertices $s,z$ of it, a \emph{foremost} temporal path from $s$ to $z$ starting at time $t$ is 
a temporal path which starts at $s$ not earlier than at time $t$ and arrives at $z$ with the earliest possible arrival time. 
A foremost temporal path from $s$ to~$z$ starting at time~$t$ can be computed in linear $O\left(|V|+\sum_{i=1}^{\lifetime}|E_i|\right)$ time~\cite{WuCKHHW16}.

\begin{restatable}{theorem}{thmnpcpaths}%
		\label{thm:NPcompleteOnPaths}
\AllProbs\ is \NP-hard even on a temporal line where all temporal paths are to the same direction.
\end{restatable}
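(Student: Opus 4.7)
The plan is to reduce from \textsc{3-SAT} (or equivalently \textsc{Exact $(3,4)$-SAT}, as used in the proof of \cref{thm:NP-h-for-3-path}). Given a formula $\phi$ with $n$ variables and $m$ clauses, I would build a single path $v_1\text{---}v_2\text{---}\cdots\text{---}v_N$ as the underlying graph of the temporal line and subdivide it into $n$ consecutive \emph{variable gadgets} followed by $m$ consecutive \emph{clause gadgets}. All time labels and source-sink endpoints would be arranged so that every valid temporal path in the constructed instance moves strictly from left to right. This immediately forces any temporal walk to coincide with a temporal path, allowing a single construction to establish hardness of both \PathVersion{} and \WalkVersion{} at once.

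For each variable $x_j$, I would introduce one source-sink pair whose path must cross the $j$-th variable gadget. The edge labels of this gadget would be designed to admit exactly two possible timing schedules for this path, corresponding to the two truth values of $x_j$; these two schedules would occupy disjoint time intervals at designated \emph{witness vertices} of the gadget. For each clause $C_i=(\ell_a\vee\ell_b\vee\ell_c)$, I would introduce one source-sink pair whose path must traverse the witness vertices of its three literal-variables. The labels along this clause path would be arranged so that the clause path can pass through a literal's witness vertex only during the time window that is left free precisely when the corresponding variable gadget encodes the value satisfying that literal. Hence the clause path can complete its journey if and only if at least one literal of $C_i$ is true.

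Correctness would then follow in the usual two directions: from a satisfying assignment I would read off concrete timings for all variable and clause paths (routing each clause path through the witness vertex of one satisfied literal), and from any feasible collection of non-intersecting paths I would extract a truth assignment from the schedules chosen by the variable paths, with each clause automatically satisfied by the literal whose witness let its clause path through.

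The main technical obstacle is that in a one-dimensional underlying graph every clause path must physically traverse \emph{all} witness vertices lying between its source and its sink, not only those of its own three literals. To avoid spurious interference, I would give each clause its own disjoint ``active'' time window and equip each variable gadget with a separate witness vertex for every clause in which the variable appears, so that unrelated variable--clause pairs never compete for the same time slot. The careful synchronization of these per-clause time windows across the whole line, together with ensuring that time labels never permit a path to stall or move leftwards, is where the bulk of the construction work lies.
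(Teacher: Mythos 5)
Your proposal has a genuine gap at its combinatorial core: the mechanism you describe for clause gadgets encodes a conjunction of the three literals, not a disjunction. On a line, a clause path has no routing freedom whatsoever --- it must physically traverse \emph{every} vertex between its source and its sink, including the witness vertices of all three of its literals (and, as you note, of every unrelated variable in between). If the clause path ``can pass through a literal's witness vertex only during the time window that is left free precisely when the corresponding variable gadget encodes the value satisfying that literal,'' then the clause path completes if and only if \emph{all three} windows are free, i.e., all three literals are true. Your correctness sketch (``routing each clause path through the witness vertex of one satisfied literal'') implicitly assumes a choice of route that simply does not exist on a path graph. To encode OR you would need the clause path's freedom to live entirely in the \emph{time} dimension --- e.g., three alternative global schedules, or a forced waiting period that can be spent at any one of several bottleneck vertices --- and the proposal does not supply such a mechanism; indeed you explicitly forbid stalling, which removes the most natural source of temporal choice.

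The paper avoids this difficulty by not reducing from SAT at all: it reduces from \textsc{Multicolored Independent Set on Unit Interval Graphs}. There, the combinatorial core is ``pick one interval per color so that the chosen intervals are pairwise disjoint,'' which maps cleanly onto a line: each color has one ``verification'' path that must wait at a single shared bottleneck vertex $v^\star$ during the time interval $[a_i^j,b_i^j]$ of its chosen interval, and dummy source-sink pairs saturate all other labels so that no path can wait anywhere else. Temporal disjointness at $v^\star$ is then literally disjointness of the chosen intervals. If you want to salvage a SAT-based route, you would need to redesign the clause gadget so that satisfaction corresponds to the existence of \emph{at least one} feasible timing among several, rather than the simultaneous availability of three windows along a forced route; as it stands, the reduction would only certify that all literals of every clause are true.
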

\begin{proof}%
We present here a polynomial-time reduction for 
\STempDisjointWalks. 
The reduction is done from \textsc{Multicolored independent
set on unit interval graphs}, which is known to be
NP-complete~\cite[Lemma~2]{Bevern2015}.
In this problem the input is a unit interval graph $G=(V,E)$ with $n$ vertices,
where $V$~is partitioned into $k$ subsets of independent vertices; 
we interpret each of these subsets as a vertex color.
The goal is to
compute an independent set of size $k$ in $G$ which contains exactly one vertex from
each color.
By possibly slightly shifting the endpoints of the intervals in the given unit
interval representation of $G$, we can assume without loss of generality that
all endpoints of the intervals are distinct.
Furthermore, we can assume without loss of generality that each interval
endpoint is an integer between $k+1$ and $k+n^2$ (while all intervals still
have the same length).

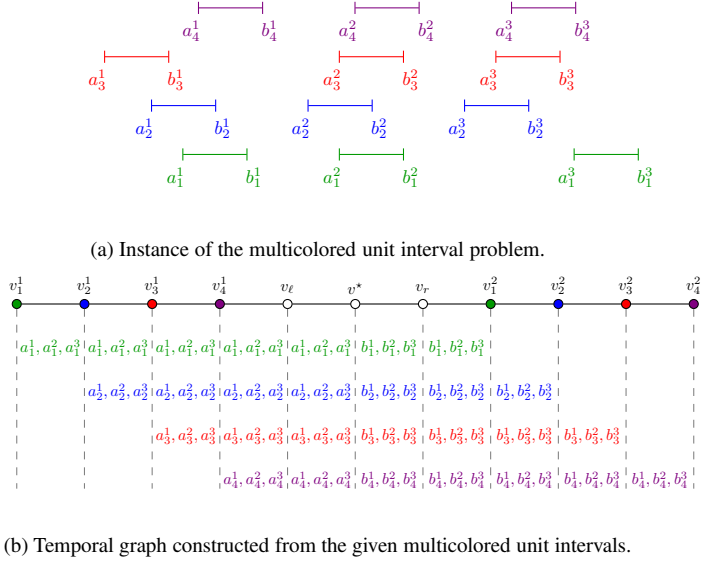
\begin{figure*}[t!]
\centering
\begin{subfigure}{.42\textwidth}
\centering
\begin{tikzpicture}[scale = 0.65, every node/.style={scale=0.7},xscale=.8]
\tikzstyle{vertlin}=[inner sep=0.1,draw,minimum size=.9mm]
\def\gr{green!60!black}
\def\bl{blue}
\def\rd{red}
\def\ye{violet}
\def\x{20} %
\def\y{1} %

\node[minimum height=4.8cm] at (1,1.2) {}; %

\node[label=below:{\color{\gr}$a_1^1$}] (A11) at ($\x*(0.1,0)$)   {};
\node[label=below:{\color{\gr}$b_1^1$}] (B11) at ($ (A11) + \x*(0.1,0)$)  {};
\draw[|-|,color=\gr] (A11) -- (B11);

\node[label=below:{\color{\gr}$a_1^2$}] (A12) at ($(B11)  + \x*(0.1,0)$)  {};
\node[label=below:{\color{\gr}$b_1^2$}] (B12) at ($(A12) + \x*(0.1,0)$)  {};
\draw[|-|,color=\gr] (A12) -- (B12);

\node[label=below:{\color{\gr}$a_1^3$}] (A13) at ($(B12)  + 2*\x*(0.1,0)$)  {};
\node[label=below:{\color{\gr}$b_1^3$}] (B13) at ($(A13) + \x*(0.1,0)$)  {};
\draw[|-|,color=\gr] (A13) -- (B13);

\node[label=below:{\color{\bl}$a_2^1$}] (A21) at ($\x*(0.06,0) + \y*(0,1)$)  {};
\node[label=below:{\color{\bl}$b_2^1$}] (B21) at ($(A21) + \x*(0.1,0)$)  {};
\draw[|-|,color=\bl] (A21) -- (B21);

\node[label=below:{\color{\bl}$a_2^2$}] (A22) at ($(B21)  + \x*(0.1,0)$)  {};
\node[label=below:{\color{\bl}$b_2^2$}] (B22) at ($(A22) + \x*(0.1,0)$)  {};
\draw[|-|,color=\bl] (A22) -- (B22);

\node[label=below:{\color{\bl}$a_2^3$}] (A23) at ($(B22)  + \x*(0.1,0)$)  {};
\node[label=below:{\color{\bl}$b_2^3$}] (B23) at ($(A23) + \x*(0.1,0)$)  {};
\draw[|-|,color=\bl] (A23) -- (B23);

\node[label=below:{\color{\rd}$a_3^1$}] (A31) at ($(0,0) + 2*\y*(0,1)$)  {};
\node[label=below:{\color{\rd}$b_3^1$}] (B31) at ($(A31) + \x*(0.1,0)$)  {};
\draw[|-|,color=\rd] (A31) -- (B31);

\node[label=below:{\color{\rd}$a_3^2$}] (A32) at ($(B31)  + 2*\x*(0.1,0)$)  {};
\node[label=below:{\color{\rd}$b_3^2$}] (B32) at ($(A32) + \x*(0.1,0)$)  {};
\draw[|-|,color=\rd] (A32) -- (B32);

\node[label=below:{\color{\rd}$a_3^3$}] (A33) at ($(B32)  + \x*(0.1,0)$)  {};
\node[label=below:{\color{\rd}$b_3^3$}] (B33) at ($(A33) + \x*(0.1,0)$)  {};
\draw[|-|,color=\rd] (A33) -- (B33);

\node[label=below:{\color{\ye}$a_4^1$}] (A41) at ($\x*(.12,0) + 3*\y*(0,1)$)  {};
\node[label=below:{\color{\ye}$b_4^1$}] (B41) at ($(A41) + \x*(0.1,0)$)  {};
\draw[|-|,color=\ye] (A41) -- (B41);

\node[label=below:{\color{\ye}$a_4^2$}] (A42) at ($(B41)  + \x*(0.1,0)$)  {};
\node[label=below:{\color{\ye}$b_4^2$}] (B42) at ($(A42) + \x*(0.1,0)$)  {};
\draw[|-|,color=\ye] (A42) -- (B42);

\node[label=below:{\color{\ye}$a_4^3$}] (A43) at ($(B42)  + \x*(0.1,0)$)  {};
\node[label=below:{\color{\ye}$b_4^3$}] (B43) at ($(A43) + \x*(0.1,0)$)  {};
\draw[|-|,color=\ye] (A43) -- (B43);

\end{tikzpicture}
\caption{Instance of the multicolored unit interval problem.}
\end{subfigure}\\
\begin{subfigure}{.56\textwidth}
\centering
\begin{tikzpicture}[scale = 0.45, every node/.style={scale=0.6},yscale=1.3]
\def\gr{green!60!black}
\def\bl{blue}
\def\rd{red}
\def\ye{violet}
\def\ln{gray}
\def\x{2} %
\def\y{-1} %

\node[minimum height=5.6cm] at (2,-2) {}; %

\node[vert2,label=above:$v_1^1$, fill = \gr] (V11) at (1,0) {};
\node[vert2,label=above:$v_2^1$, fill = \bl] (V21) at ($ (V11) + \x*(1,0)$){};
\node[vert2,label=above:$v_3^1$, fill = \rd] (V31) at ($ (V21) + \x*(1,0)$) {};
\node[vert2,label=above:$v_4^1$, fill = \ye] (V41) at ($ (V31) + \x*(1,0)$) {};
\node[vert2,label=above:$v_\ell$] (Vell) at ($ (V41) + \x*(1,0)$) {};
\node[vert2,label=above:$v^\star$] (Vstar) at ($ (Vell) + \x*(1,0)$) {};
\node[vert2,label=above:$v_r$] (Vr) at ($ (Vstar) + \x*(1,0)$) {};
\node[vert2,label=above:$v_1^2$,fill= \gr] (V12) at ($ (Vr) + \x*(1,0)$){};
\node[vert2,label=above:$v_2^2$, fill = \bl] (V22) at ($ (V12) + \x*(1,0)$) {};
\node[vert2,label=above:$v_3^2$, fill = \rd] (V32) at ($ (V22) + \x*(1,0)$) {};
\node[vert2,label=above:$v_4^2$, fill = \ye] (V42) at ($ (V32) + \x*(1,0)$) {};

\path[draw] (V11) --(V21);
\node at ($ (V11) + 0.5*\x*(1,0) + \y*(0,1)$) {\color{\gr} $a_1^1, a_1^2, a_1^3$};

\path[draw] (V21) --(V31);
\node at ($ (V21) + 0.5*\x*(1,0) + \y*(0,1)$) {\color{\gr} $a_1^1, a_1^2, a_1^3$};
\node at ($ (V21) + 0.5*\x*(1,0) + 2*\y*(0,1)$) {\color{\bl} $a_2^1, a_2^2, a_2^3$};

\path[draw] (V31) --(V41);
\node at ($ (V31) + 0.5*\x*(1,0) + \y*(0,1)$) {\color{\gr} $a_1^1, a_1^2, a_1^3$};
\node at ($ (V31) + 0.5*\x*(1,0) + 2*\y*(0,1)$) {\color{\bl} $a_2^1, a_2^2, a_2^3$};
\node at ($ (V31) + 0.5*\x*(1,0) + 3*\y*(0,1)$) {\color{\rd} $a_3^1, a_3^2, a_3^3$};

\path[draw] (V41) --(Vell);
\node at ($ (V41) + 0.5*\x*(1,0) + \y*(0,1)$) {\color{\gr} $a_1^1, a_1^2, a_1^3$};
\node at ($ (V41) + 0.5*\x*(1,0) + 2*\y*(0,1)$) {\color{\bl} $a_2^1, a_2^2, a_2^3$};
\node at ($ (V41) + 0.5*\x*(1,0) + 3*\y*(0,1)$) {\color{\rd} $a_3^1, a_3^2, a_3^3$};
\node at ($ (V41) + 0.5*\x*(1,0) + 4*\y*(0,1)$) {\color{\ye} $a_4^1, a_4^2, a_4^3$};

\path[draw] (Vell) --(Vstar);
\node at ($ (Vell) + 0.5*\x*(1,0) + \y*(0,1)$) {\color{\gr} $a_1^1, a_1^2, a_1^3$};
\node at ($ (Vell) + 0.5*\x*(1,0) + 2*\y*(0,1)$) {\color{\bl} $a_2^1, a_2^2, a_2^3$};
\node at ($ (Vell) + 0.5*\x*(1,0) + 3*\y*(0,1)$) {\color{\rd} $a_3^1, a_3^2, a_3^3$};
\node at ($ (Vell) + 0.5*\x*(1,0) + 4*\y*(0,1)$) {\color{\ye} $a_4^1, a_4^2, a_4^3$};

\path[draw] (Vstar) --(Vr);
\node at ($ (Vstar) + 0.5*\x*(1,0) + \y*(0,1)$) {\color{\gr} $b_1^1, b_1^2, b_1^3$};
\node at ($ (Vstar) + 0.5*\x*(1,0) + 2*\y*(0,1)$) {\color{\bl} $b_2^1, b_2^2, b_2^3$};
\node at ($ (Vstar) + 0.5*\x*(1,0) + 3*\y*(0,1)$) {\color{\rd} $b_3^1, b_3^2, b_3^3$};
\node at ($ (Vstar) + 0.5*\x*(1,0) + 4*\y*(0,1)$) {\color{\ye} $b_4^1, b_4^2, b_4^3$};

\path[draw] (Vr) -- (V12);
\node at ($ (Vr) + 0.5*\x*(1,0) + \y*(0,1)$) {\color{\gr} $b_1^1, b_1^2, b_1^3$};
\node at ($ (Vr) + 0.5*\x*(1,0) + 2*\y*(0,1)$) {\color{\bl} $b_2^1, b_2^2, b_2^3$};
\node at ($ (Vr) + 0.5*\x*(1,0) + 3*\y*(0,1)$) {\color{\rd} $b_3^1, b_3^2, b_3^3$};
\node at ($ (Vr) + 0.5*\x*(1,0) + 4*\y*(0,1)$) {\color{\ye} $b_4^1, b_4^2, b_4^3$};

\path[draw] (V12) --(V22);
\node at ($ (V12) + 0.5*\x*(1,0) + 2*\y*(0,1)$) {\color{\bl} $b_2^1, b_2^2, b_2^3$};
\node at ($ (V12) + 0.5*\x*(1,0) + 3*\y*(0,1)$) {\color{\rd} $b_3^1, b_3^2, b_3^3$};
\node at ($ (V12) + 0.5*\x*(1,0) + 4*\y*(0,1)$) {\color{\ye} $b_4^1, b_4^2, b_4^3$};

\path[draw] (V22) --(V32);
\node at ($ (V22) + 0.5*\x*(1,0) + 3*\y*(0,1)$) {\color{\rd} $b_3^1, b_3^2, b_3^3$};
\node at ($ (V22) + 0.5*\x*(1,0) + 4*\y*(0,1)$) {\color{\ye} $b_4^1, b_4^2, b_4^3$};

\path[draw] (V32) --(V42);
\node at ($ (V32) + 0.5*\x*(1,0) + 4*\y*(0,1)$) {\color{\ye} $b_4^1, b_4^2, b_4^3$};

\draw [dashed,color=\ln] (V11) -- ($(V11) + 4.2*\y*(0,1)$);
\draw [dashed,color=\ln] (V21) -- ($(V21) + 4.2*\y*(0,1)$);
\draw [dashed,color=\ln] (V31) -- ($(V31) + 4.2*\y*(0,1)$);
\draw [dashed,color=\ln] (V41) -- ($(V41) + 4.2*\y*(0,1)$);
\draw [dashed,color=\ln] (Vell) -- ($(Vell) + 4.2*\y*(0,1)$);
\draw [dashed,color=\ln] (Vstar) -- ($(Vstar) + 4.2*\y*(0,1)$);
\draw [dashed,color=\ln] (Vr) -- ($(Vr) + 4.2*\y*(0,1)$);
\draw [dashed,color=\ln] (V12) -- ($(V12) + 4.2*\y*(0,1)$);
\draw [dashed,color=\ln] (V22) -- ($(V22) + 4.2*\y*(0,1)$);
\draw [dashed,color=\ln] (V32) -- ($(V32) + 4.2*\y*(0,1)$);
\draw [dashed,color=\ln] (V42) -- ($(V42) + 4.2*\y*(0,1)$);

\end{tikzpicture}
\caption{Temporal graph constructed from the given multicolored unit intervals.}
\end{subfigure}
\caption{Example of the reduction described in the proof of \cref{thm:NPcompleteOnPaths}.}
\label{fig:reductionIntMCIS}
\end{figure*}

\paragraph{Construction.} From the given multi-colored unit intervals in $G$,
we construct a temporal line $\TP$ using the following procedure.
Let $\{c_1, \dots, c_k\}$ be a set of all colors of the intervals in $G$.
First we fix an arbitrary linear ordering $c_1 < c_2 < \ldots < c_k$ of the $k$
colors, and we add to the underlying path $P$ of $\TP$ two vertices $v_i^1$ and
$v_i^2$, for every color $c_i$.
We add to $P$ also three basis vertices $v_\ell , v^\star, v_r$.
The vertices of~$P$ are ordered starting from $v_1^1,v_2^1,\ldots,v_k^1$, followed
by the basis vertices $v_\ell, v^\star, v_r$, and finishing with
$v_1^2,v_2^2,\ldots,v_k^2$.
At the end we have $P = (v_1^1, v_2^1, \dots, v_k^1, v_\ell, v^\star, v_r,
v_1^2, v_2^2, \dots, v_k^2)$.

We construct the multiset~$S$ of source-sink pairs as follows. 
Let~$m_i$ be the number of intervals of color $c_i$. 
For every color~$c_i$ we add the pair $(v_i^1, v_i^2)$ to $S$. We refer to this
source-sink pair as ``the verification source-sink pair for color~$c_i$''.
Furthermore, we add $m_i-1$ copies of the pair $(v_i^1, v_\ell)$ to $S$ and we add $m_i-1$
copies of pair $(v_r, v_i^2)$ to $S$. We call these $2m_i-2$ source-sink
pairs the ``dummy source-sink pairs for color $c_i$''. 

To fully define the
temporal line~$\TP$, we still need to add time labels to the edges of $P$.
Denote by $a_i^j$ and $b_i^j$ the start and end points of the $j$th interval of
color $c_i$.
We set up the edge labels of the path $P$ from $v_i^1$ to $v_i^2$ as follows.
To edge $\{v_{s}^1,  v_{s+1}^1\}$ with $s\in[k-1]$ we add the labels~$a_i^j$
with $i\le s$. To edges $\{v_{k}^1,  v_\ell\}$ and $\{v_\ell,  v^\star\}$ we add
all labels~$a_i^j$.
To edge $\{v_{s}^2,  v_{s+1}^2\}$ with $s\in[k-1]$ we add the labels~$b_i^j$
with $i> s$. To edges $\{v^\star,  v_r\}$ and $\{v_r,  v^2_1\}$ we
add all labels~$b_i^j$.
See \cref{fig:reductionIntMCIS} for an example. The construction can clearly be
performed in polynomial time.

\paragraph{Correctness.} ($\Rightarrow$): 
Assume there is a multicolored independent set $V'\subseteq V$ in $G$. Let
$v_i\in V'$ be the vertex in the independent set with color $c_i$ and let
$[a_i^j,b_i^j]$ be the interval of $v_i$. Then for the verification source-sink
pair of $c_i$ we use the following temporal path:
$((v^1_i, v^1_{i+1}, a_i^j), (v^1_{i+1}, v^1_{i+2}, a_i^j), \ldots, (v^1_{k-1},
v^1_{k}, a_i^j),$ $(v^1_k, v_\ell, a_i^j), (v_\ell, v^\star, a_i^j), (v^\star,
v^r, b_i^j), (v_r, v^2_{1}, b_i^j), (v^2_1, v^2_2, b_i^j),$ $\ldots, (v^2_{i-1},
v^2_i, b_i^j))$. For the dummy source-sink pairs $(v_i^1, v_\ell)$ of~$c_i$ we use the temporal paths $((v^1_i, v^1_{i+1}, a_i^{j'}),\ldots,$ 
$(v^1_{k-1}, v^1_{k}, a_i^{j'}), (v^1_k, v_\ell, a_i^{j'}))$ with $j'\neq j$.
Note that there are exactly $m_i-1$ pairwise different paths of this kind. Analogously, for the
dummy source-sink pairs $(v_r, v_i^2)$ of~$c_i$ we use the temporal paths
$((v_r, v^2_{1}, b_i^{j'}), (v^2_1, v^2_2, b_i^{j'}),\ldots,$ $(v^2_{i-1},
v^2_i, b_i^{j'}))$ with $j'\neq j$. It is easy to check that the temporal paths
for the dummy source-sink pairs of all colors do not temporally intersect. Now
assume for contradiction that two verification source-sink pairs of colors $c_i$
and $c_{i'}$ temporally intersect. Then they have to intersect in $v^\star$,
since this is the only vertex where the paths wait. By construction, the
temporal path for the verification source-sink pairs of color
$c_i$ visits $v^\star$ during the interval $[a_i^j,b_i^j]$ and the verification source-sink pairs of color
$c_{i'}$ visits $v^\star$ during the interval $[a_{i'}^{j'},b_{i'}^{j'}]$. These
two intervals correspond to the intervals of the vertices of colors $c_i$ and
$c_{i'}$ in the multicolored independent set $V'$. Hence, those intervals
intersecting is a contradiction to the assumption that $V'$ is in fact an
independent set.

($\Leftarrow$): 
Assume we have a set of pairwise temporally disjoint walks for all source-sink
pairs in $S$. Note that all edges except $\{v_\ell, v^\star \}$
and $\{v^\star, v_r\}$ have as many time labels as temporal walks that need to
go through them. Furthermore, note that $\{v_\ell, v^\star \}$ has the same
labels as $\{v^1_k, v_\ell \}$ and $\{v^\star, v_r\}$ has the same labels as
$\{v_r, v^2_1\}$. This in particular
implies that all temporal walks are in fact paths since the only vertex that
could be visited by a path for more than one time step is $v^\star$.
Therefore, for every pair
$(s,z)\in S$, no temporal path from $s$ to~$z$ can ever stop and wait at any vertex different from~$v^\star$.
Furthermore, the only paths going through vertex~$v^\star$ are the paths
connecting vertices $v_i^1$ and $v_i^2$ (which correspond to color $c_i$); we will refer
to this path as the color path of $c_i$.
Consider color $c_1$ and its dummy source-sink pairs $(v_1^1, v_\ell)$. By
construction, the edge $\{v_1^1,v_2^1\}$ has time labels corresponding to the
start points $a_1^j$ of intervals from the $m_1$~vertices of~$G$ that have color
$c_1$. It follows that the temporal paths for these dummy source-sink pairs and
the color path of $c_1$ use only time labels corresponding to the
start points $a_1^j$ of intervals from the $m_1$ vertices of $G$ that have color
$c_1$ until they are at $v_\ell$ or arrive at $v^\star$, respectively, since
they cannot wait at any vertex. Now by induction, this holds for all other
colors $c_i$ and by an analogous argument, this also holds for the ``second
half''. More specifically, we also have that temporal paths for the dummy
source-sink pairs $(v_r, v_i^2)$ as well as the ``second part'' of the color
path of $c_i$ use time labels corresponding to end points $b_i^j$ of intervals from the vertices of~$G$ that have color
$c_i$ when going from $v_r$ (respectively $v^\star$) to their corresponding
destinations.

It follows that each color path can enter and leave vertex~$v^\star$ only at the
time corresponding to the start and end points of its color intervals.
In any other case some of the other vertices are blocked, which prevents the
completion of others temporal $S$-paths.
Recall that intervals of the same color are non-overlapping. Hence, for every
color path corresponding to a color $c_i$ we can find one interval $[a_i^j,b_i^j]$ 
such that the color path visits $v^\star$ in an interval that includes
$[a_i^j,b_i^j]$. Since the color paths are temporally non-intersecting, the
vertices corresponding to the intervals form a multicolored independent set in
$G$.
\end{proof}

Next, we show fixed-parameter tractability of \PathVersion{}
parameterized by the number $|S|$ of source-sink pairs if the
underlying graph is a tree.

\begin{restatable}{theorem}{thmfpt}%
		\label{thm:fpt}
\PathVersion{} on temporal trees is in FPT when parameterized
by $|S|$, as it can be solved in $O\left(|S|^{|S|+3}\cdot |\TG|\right)$ time.
\end{restatable}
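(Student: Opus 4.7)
The plan is to exploit the fact that in a tree, for each source-sink pair $(s_i,z_i)\in S$ there is a unique underlying path $\pi_i$ from $s_i$ to $z_i$. Any temporal $(s_i,z_i)$-path must traverse the edges of $\pi_i$ in order, so the only freedom is in choosing, for each edge of $\pi_i$, one of its available time labels, subject to the labels being non-decreasing along $\pi_i$. The problem therefore reduces to a scheduling task: produce, for each $i$, such a sequence of labels so that at every vertex $v$ the time intervals during which the (at most $|S|$) paths occupy $v$ are pairwise disjoint.

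To obtain an FPT algorithm, I would enumerate a compact structural certificate of a solution and verify each certificate in linear time. A certificate consists of (i) a total order $\sigma$ of the $|S|$ paths---at most $|S|!\le |S|^{|S|}$ options---together with (ii) a constant amount of extra per-path data describing how each path interacts with already-scheduled paths, for instance pointers to at most a constant number of previously scheduled paths that it must yield to at specific pivot vertices. Bounding (ii) crudely by $|S|^{3}$ choices across the certificate gives the claimed total of $|S|^{|S|+3}$ certificates.

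Given a certificate, I would process the paths in the order $\sigma$. After $P_{\sigma(1)},\ldots,P_{\sigma(j-1)}$ have been fixed, they block certain time intervals at each vertex, and I would compute the foremost temporal realization of $\pi_{\sigma(j)}$ that respects those blockages and the extra pivot data. This is a direct adaptation of the linear-time foremost temporal path algorithm of~\cite{WuCKHHW16} and runs in $O(|\TG|)$ time per path. The total running time over all certificates is therefore $O(|S|^{|S|+3}\cdot |\TG|)$.

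The main obstacle is correctness: I must argue that whenever the instance is a \yes-instance, at least one certificate drives the greedy procedure to success. The guiding idea is to take $\sigma$ to be, say, the order of increasing arrival times in the intended solution, and to let the pivot data encode, for each pair of paths that share a subpath in opposite directions, the unique vertex where they pass one another (which is the genuinely subtle case, since for such pairs the relative visit-order at the shared vertices flips at a single ``crossing'' point). An exchange argument should then show that replacing each $P_{\sigma(j)}$ by its foremost realization (subject to the previously fixed paths and the pivot data) can only free up blocked time intervals for the later paths, so feasibility of the remaining paths $P_{\sigma(j+1)},\ldots,P_{\sigma(|S|)}$ is preserved.
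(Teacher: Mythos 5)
Your algorithmic skeleton is the same as the paper's: in a tree each pair $(s_i,z_i)$ determines a unique underlying path, you enumerate the $|S|!$ orderings of these paths, and for each ordering you greedily realize each path as a foremost temporal path subject to the time labels already blocked by earlier paths, reusing the linear-time foremost-path routine of~\cite{WuCKHHW16}. The paper does exactly this, organizing the greedy phase via an auxiliary DAG $G_\pi$ (arc $v_i\to v_j$ iff the underlying paths intersect and $P_i$ is scheduled to cross the intersection first) and repeatedly routing the in-degree-$0$ paths; the $|S|^{3}$ factor in the bound is per-permutation running time, not extra certificate choices.

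There are, however, two concrete gaps. First, your certificate component (ii) --- the ``pivot data'' recording where two oppositely-directed paths swap precedence on a shared subpath --- is both undefined in any checkable way and unnecessary, because the phenomenon it is meant to handle cannot occur. If two temporally disjoint temporal paths share the subpath $u_1,\dots,u_m$ and their precedence order flipped between consecutive vertices $u_r$ and $u_{r+1}$, then the time at which the first path traverses the edge $\{u_r,u_{r+1}\}$ would have to be strictly smaller than the time at which the second traverses it, and simultaneously strictly larger --- a contradiction. Hence one path entirely precedes the other at \emph{every} vertex of the intersection, and a single binary choice per intersecting pair suffices. To compress these pairwise choices into one global permutation you additionally need the Helly property of paths in a tree (three pairwise-intersecting paths share a common vertex, see~\cite{bollobas1986combinatorics}), which guarantees that the pairwise precedences induced by a permutation are simultaneously realizable; your writeup never justifies this consistency. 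Second, the exchange argument --- that replacing each path by its foremost realization given the blocked labels preserves feasibility of the remaining paths --- is exactly the step that carries the correctness of the whole scheme, and you only assert that it ``should'' work. Until that lemma is actually proved (it is the standard ``earlier arrival only frees up labels'' argument, applied at every vertex of the tree), the proof is incomplete.
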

\begin{proof}%
Let $I = (\TG, S)$ be an instance of \PathVersion{}, 
the underlying graph~$G$ %
being a tree and $S$~consists of 
$k$~source-sink pairs $(s_1,z_1), \ldots, (s_k,z_k)$.
We solve~$I$ using the following procedure.

First we can observe that, since $G$ is a tree, every source-sink pair
$(s_i,z_i)$ in $S$ corresponds to exactly one path $P_i$ in $G$.
Furthermore, if two paths $P_1$ and $P_2$ intersect in a tree, their
intersection $P_1 \cap P_2$ is a continuous path in $G$ (potentially containing only one vertex).
In the case that $P_1, P_2$ intersect, there are two ways that $P_1 \cap P_2$ can be traversed: 
either first by $P_1$ and then $P_2$, or vice-versa. 
The main idea of our algorithm is that we enumerate all possible ways to traverse the intersections of each pair of paths corresponding to two source-sink pairs.

Note that the set of $k$ paths in the tree $G$ (where every path corresponds
to one source-sink pair) have the Helly property~\cite{bollobas1986combinatorics}.
That is, for any three of these paths $P_i, P_j, P_\ell$, if they pairwise have a non-empty intersection then 
$P_i \cap P_j \cap P_\ell \neq \emptyset$. 
Therefore, in order to enumerate all possible ways to traverse the intersections of each pair of source-sink paths, it suffices to just 
enumerate all permutations of these $k$ paths. 
We do this as follows.
Let $V = \{v_1, v_2, \dots , v_k \}$ be the set of vertices, where each $v_i$ corresponds to the path $P_i$ of the source-sink pair $(s_i, z_i) \in S$,
and let $\pi$ be an arbitrary permutation of $V$. 
Let $i \neq j \in [k]$. If $v_i <_\pi v_j$ then we route the path $P_i$ before the path $P_j$ on their intersection $P_1 \cap P_2$ 
(assuming $P_1 \cap P_2 \neq \emptyset$). 
What remains to check is whether $\pi$ leads to a feasible routing of the $k$ source-sink pairs paths. 
We check this efficiently using an auxiliary (static) directed graph $G_\pi$ on the vertices of $V$, defined as follows. 
Any two vertices $v_i$ and $v_j$ are connected with an oriented arc in $G_\pi$ if and only if $P_i \cap P_j \neq \emptyset$. 
In the latter case, the arc is oriented from $v_i$ to $v_j$ if and only if $v_i <_\pi v_j$.

It is not hard to see that $G_\pi$ has no directed cycles, and thus there is at least one vertex with in-degree $0$.
Let $U$ be the set of all such vertices.
Clearly, since the vertices of $U$ have no incoming arcs, their corresponding paths in $G$ are pairwise non-intersecting. 
Even more, each of these paths $P_i$ (where $v_i \in U$) visits each of its edges before any other path $P_j$. 
Therefore we determine the temporal edges of every such path $P_i$ by computing a foremost temporal path 
(i.e.,~a temporal path with the earliest arrival time) from $s_i$ to $z_i$ in~$\TG$. 
Let $v_x$ be an arbitrary internal vertex of one of these temporal paths $\mathcal{P}$, 
and suppose that $v_x$ is visited by $\mathcal{P}$ within the time interval $[a_x,b_x]$. 
Then we remove from the edges $\{v_{x-1},v_x\}, \{v_x,v_{x+1}\}$ of
$\mathcal{P}$ all labels $\ell\leq b_x$.
Furthermore, we also remove the vertex set $U$ from $G_\pi$ and repeat the
procedure, until either all $k$ temporal paths $P_1, \ldots, P_k$ are routed according to the permutation $\pi$ 
or some path $P_i$ cannot be routed. 
In the latter case we select the next (not yet investigated) permutation $\pi'$ and repeat the whole procedure.

During the above procedure we construct $O(k!)=O(k^{k+1})$ different $k$-vertex graphs $G_\pi$ (one for each permutation).
The construction of each one of them takes $O(k^2)$ time. 
For each auxiliary graph $G_\pi$ we can sequentially calculate all foremost temporal 
paths in linear $O(|V|+\sum_{i=1}^{\lifetime}|E_i|)$ time~\cite{WuCKHHW16}.
In total, since $|S|=k$, all computations can be done in
$O\left(|S|^{|S|+3}\cdot \left(|V|+\sum_{i=1}^{\lifetime}|E_i|\right)\right)$
time.
\end{proof}

We remark that it remains open whether a similar result can be obtained for
\WalkVersion, since we cannot assume w.l.o.g.\ that the
temporally disjoint walks are actually paths, even on temporal lines (for
an example see \cref{fig:exampleWalks}).
Presumably (and in contrast to the general case) the walk version is
computationally more difficult than the path version of our problem on temporal
paths and trees.

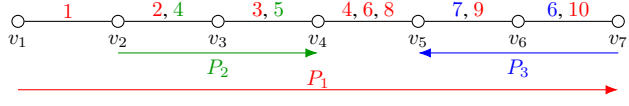
\begin{figure}
\centering
\begin{tikzpicture}[scale = 0.7, every node/.style={scale=0.8},xscale=.95]
\def\x{2} %
\def\y{0.2} %
\def\gr{green!60!black}
\def\bl{blue}
\def\rd{red}

\node[vert2,label=below:$v_1$] (V1) at (1,0) {};
\node[vert2,label=below:$v_2$] (V2) at ($ (V1) + \x*(1,0)$) {};
\node[vert2,label=below:$v_3$] (V3) at ($ (V2) + \x*(1,0)$) {};
\node[vert2,label=below:$v_4$] (V4) at ($ (V3) + \x*(1,0)$) {};
\node[vert2,label=below:$v_5$] (V5) at ($ (V4) + \x*(1,0)$) {};
\node[vert2,label=below:$v_6$] (V6) at ($ (V5) + \x*(1,0)$) {};
\node[vert2,label=below:$v_7$] (V7) at ($ (V6) + \x*(1,0)$) {};

\path[draw] (V1) --(V2);
\node at ($ (V1) + 0.5*\x*(1,0) + \y*(0,1)$) {{\color{\rd}$1$}};
\path[draw] (V2) --(V3);
\node at ($ (V2) + 0.5*\x*(1,0) + \y*(0,1)$) {{\color{\rd}$2$}, {\color{\gr}$4$}};
\path[draw] (V3) --(V4);
\node at ($ (V3) + 0.5*\x*(1,0) + \y*(0,1)$) {{\color{\rd}$3$}, {\color{\gr}$5$}};
\path[draw] (V4) --(V5);
\node at ($ (V4) + 0.5*\x*(1,0) + \y*(0,1)$) {{\color{\rd}$4$}, {\color{\rd}$6$}, {\color{\rd}$8$}};

\path[draw] (V5) --(V6);
\node at ($ (V5) + 0.5*\x*(1,0) + \y*(0,1)$) {{\color{\bl}$7$}, {\color{\rd}$9$}};
\path[draw] (V6) --(V7);
\node at ($ (V6) + 0.5*\x*(1,0) + \y*(0,1)$) {{\color{\bl}$6$}, {\color{\rd}$10$}};

\draw[-Latex,color=\rd] ($(V1) - (0,1.3)$) -- ($ (V7) - (0,1.3)$);
\node at ($(V4) -(0,1.1)$) {\color{\rd}$P_1$} {};
\draw[-Latex,color=\gr] ($(V2) - (0,0.6)$) -- ($ (V4) - (0,0.6)$);
\node at ($(V3) -(0,0.9)$) {\color{\gr}$P_2$} {};
\draw[-Latex,color=\bl] ($(V7) - (0,0.6)$) -- ($ (V5) - (0,0.6)$);
\node at ($(V6) -(0,0.9)$) {\color{\bl}$P_3$} {};

\end{tikzpicture}
\caption{Temporally disjoint walks on a temporal line are not necessarily equal
to temporally disjoint paths. 
Suppose that one wants to determine edge-labels of the following walks: $P_1$ from $v_1$ to $v_7$, $P_2$ from $v_2$ to $v_4$ and $P_3$ from $v_7$ to $v_5$ on the depicted temporal graph. 
In the feasible solution $P_2$ and $P_3$ are temporal paths, but $P_1$ has to be a walk.}
\label{fig:exampleWalks}
\end{figure}

Finally, we show that we can solve \AllProbs\ in polynomial time if the
underlying graph is a path and all source-sink pairs consist of the endpoints of
that path.

\begin{restatable}{theorem}{thmpoly}%
		\label{thm:poly}
Let $\mathcal{G}$ be a temporal line having $P = (v_0, v_1, v_2, \ldots, v_n)$ as its underlying path. 
If $S$ contains $k$ times the source-sink pair $(v_0, v_n)$ and $\ell=|S|-k$ times the source-sink pair $(v_n, v_0)$, 
then \AllProbs\ can be solved on $\mathcal{G}$ in polynomial time, 
namely $O\left(k\ell\left(k+\ell\right)\cdot |\TG|\right)$.
\end{restatable}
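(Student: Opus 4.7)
The plan is to exploit the rigid structure that arises when all source-sink pairs connect the two extremal vertices $v_0$ and $v_n$ of the temporal line. Call the $k$ pairs $(v_0,v_n)$ \emph{forward} and the $\ell$ pairs $(v_n,v_0)$ \emph{backward}. First I argue that every solution may be assumed to consist of simple temporal paths: on a line any walk from $v_0$ to $v_n$ (or vice versa) can be ``straightened'' into a temporal path whose visit-interval at each vertex is contained in the walk's visit-interval at that vertex, so straightening can only remove, not create, temporal intersections.

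Next I establish the structural characterization through an interval analysis at each vertex $v_i$. For two forward paths with label sequences $(t_1,\dots,t_n)$ and $(t'_1,\dots,t'_n)$, temporal disjointness is equivalent to one path strictly leading the other on every edge in the ``gap'' sense $t_{i+1}<t'_i$ for all $i\in[n-1]$ (or the symmetric condition), and analogously for two backward paths. For a forward $P=(t_1,\dots,t_n)$ and a backward $Q=(u_1,\dots,u_n)$, propagating the disjointness condition along the line yields the cleaner dichotomy: either $t_i<u_i$ for all $i$ or $u_i<t_i$ for all $i$. It follows that the $k$ forward paths in any solution are totally ordered as $P_1^{\to}<\dots<P_k^{\to}$ by edgewise precedence, the $\ell$ backward paths similarly as $Q_1^{\leftarrow}<\dots<Q_\ell^{\leftarrow}$, and the cross-direction precedences form a monotone ``staircase'' inside $[k]\times[\ell]$. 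Equivalently, a feasible routing corresponds to a shuffle of the two sorted sequences.

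Given this structure, the algorithm is a dynamic program over states $(a,b)\in[0,k]\times[0,\ell]$, where $(a,b)$ records that the $a$ fastest forward and $b$ fastest backward paths have been committed. From $(a,b)$ there are two candidate transitions: (i) route $P_{a+1}^{\to}$ next as a foremost temporal path (earliest arrival at $v_n$) in the sub-temporal graph obtained by deleting, on every edge, the labels that are at most the largest label already committed on that edge; or (ii) route $Q_{b+1}^{\leftarrow}$ analogously. A foremost temporal path is computable in $O(|\TG|)$ time by the algorithm of \citet{WuCKHHW16}, and the constrained sub-temporal graph can be rebuilt in $O((k+\ell)\cdot |\TG|)$ per transition. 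With $O(k\ell)$ DP states, the total running time matches the claimed $O(k\ell(k+\ell)\cdot|\TG|)$.

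The main obstacle is proving correctness of the DP, specifically that it always suffices to fix the next routed forward (resp.\ backward) path to be the foremost compatible one. This is established by an exchange lemma: if some feasible completion routes $P_{a+1}^{\to}$ with label sequence $(t^*_1,\dots,t^*_n)$, then replacing it by the foremost alternative---whose labels are componentwise no larger by definition of foremost---preserves feasibility of all subsequent routings, because every path routed after $P_{a+1}^{\to}$ in the interleaving is forced by the staircase to use strictly larger labels on every edge than $P_{a+1}^{\to}$, and lowering $P_{a+1}^{\to}$'s labels only loosens those constraints. The symmetric statement for backward transitions then closes the argument.
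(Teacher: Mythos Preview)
Your approach is essentially the paper's: both reduce walks to paths, both use greedy foremost routing justified by an exchange argument, and both organize the search as a dynamic program indexed by how many forward and backward paths have been committed. Your structural analysis (the total orders within each direction and the ``staircase'' between directions, yielding a shuffle characterization) is more explicit than the paper's and is a nice addition.

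There is, however, a gap in your DP formulation. The state $(a,b)$ alone does not determine the residual sub-temporal graph: which labels remain available depends on the particular shuffle used to reach $(a,b)$, not just on the counts. (Route one forward and one backward foremost path in either order on a short line with a few labels; the per-edge maxima differ.) The paper sidesteps this by grouping same-direction paths into \emph{blocks} and passing only a single time threshold $t$ across a direction switch. The key observation---which you come close to but do not state---is that when the last committed path is forward with arrival time $t_n$ and the next path is backward, the full edgewise constraint $u_i>t_i$ collapses to just $u_n>t_n$ (and symmetrically for the other switch), because the two label sequences are monotone in opposite senses. Within a same-direction block the paper re-routes foremost paths iteratively, which correctly enforces the ``gap'' constraint $t'_i>t_{i+1}$ at each internal vertex; note that your residual rule (delete labels $\le$ the per-edge max committed) is too permissive for this case, since it only enforces $t'_i>t_i$ and can return a ``foremost'' path that temporally intersects the previous same-direction path at a waiting vertex. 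With these two fixes---block-based transitions keyed on the last arrival time, and vertex-interval-based label deletion as in the paper---your plan becomes the paper's proof and meets the stated running-time bound.
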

\begin{proof}%
We present here the proof for the problem version \STempDisjointPaths. 
Let $I = (\TP, S)$ be an instance of \STempDisjointPaths, 
where $\TP$ is a given temporal line  %
with $P = (v_0, v_1, v_2, \ldots, v_n)$ as its underlying path. 
Assume that there have to be $k$ (resp.~$\ell=|S|-k$) temporally disjoint $(v_0,v_n)$- (resp.~$(v_n,v_0)$-) paths in the output, \ie they must have the orientation from $v_0$ to~$v_n$,  (resp.~from $v_n$ to $v_0$).

We solve the instance $I$ using dynamic programming. The main idea is that, since all temporal paths
start and end in endpoints of $P$,
in any optimal solution, once a temporal path starts, it proceeds in the fastest possible way 
(without interfering with previously started paths). 
Therefore, assuming we start with $(v_0, v_n)$-temporal paths, we only need to find out \emph{how many} 
$(v_0, v_n)$-temporal paths follow the starting path, after that \emph{how many} $(v_n,v_0)$-temporal paths follow, then after that \emph{how many} $(v_0,v_n)$-temporal paths follow, etc.

Let ${0\leq i \leq k}$, ${0\leq j \leq \ell}$, and ${1\leq t \leq \lifetime}$. 
Then $L(i,j,t)$ denotes the earliest arrival time of
$(k-i) + (\ell-j)$ temporally non-intersecting temporal paths with 
$k-i$ being $(v_0, v_n)$-temporal paths and $\ell-j$ being $(v_n,v_0)$-temporal paths,
assuming that the earliest-starting temporal path is a $(v_0,v_n)$-temporal path that starts at time~$t$.
If it is not possible to route such $(k-i)+(\ell-j)$ temporally non-intersecting temporal paths starting at time $t$, then let $L(i,j,t) = \infty$. 
Similarly we define $R(i,j,t)$,
with the only difference that here the earliest-starting 
temporal path needs to start at time~$t$ 
from~$v_n$ and finishes at~$v_0$. 
For the sake of completeness, we let $L(i,j,\infty)=R(i,j,\infty) = \infty$ for every $i\leq k$ and every $j\leq \ell$. 
Furthermore, for every $t$, every $i\leq k-1$, and every $j\leq \ell-1$, we let $L(k,j,t) = R(i,\ell,t) = \infty$. 
Finally we let $L(k,\ell,t) = R(k,\ell,t) = t-1$. 
Note that, the input instance $I$ is a \yes-instance if and only if $\min \{
L(0,0,1),\ R(0,0,1)\} \neq \infty$.
Furthermore, note that, for every triple $i,j,t$, the value $\min\{L(i,j,t),R(i,j,t)\}$ is the earliest arrival time of all temporal paths 
in the subproblem where, until time $t-1$, exactly $i$ and $j$ temporally non-intersecting temporal $(v_0,v_n)$- and $(v_n,v_0)$-paths, respectively, have been routed.

The value $L(i,j,t)$ can be recursively computed as follows. 
Suppose that, in the optimal solution, ${1\leq p\leq k-i}$ temporally non-intersecting $(v_0,v_n)$-temporal paths are first routed (starting at time $t$) before the first $(v_n,v_0)$-temporal path (among the $\ell-j$ ones) is routed.
Let $t_p$ be the earliest arrival time of these $p$ paths if they can all be routed; if not, then we set $t_p = \infty$.
Then:
\begin{equation}
\label{L-eq}
L(i,j,t) = \min \{R(i+p,j,t_{p}+1) \mid 1\leq p\leq k-i\}.
\end{equation}
The value $R(i,j,t)$ can be computed similarly:
\begin{equation}
\label{R-eq}
R(i,j,t) = \min \{L(i,j+p,t^{*}_{p}+1) \mid 1\leq p\leq \ell-j\}, 
\end{equation}
where $(v_n,v_0)$-temporal paths are routed.

The values $\{t_p \mid 1\leq p \leq k-i\}$ can be computed as follows. 
If $p=1$ then $t_p$ is the arrival time of the $(v_0,v_n)$-foremost temporal path~$P_1$. %
To determine $t_2$, we first compute %
$P_1$ and %
then, for every internal vertex $v_x$ of $\TP$, if $v_x$ is visited by $P_1$ within the time interval $[a_x,b_x]$, 
we remove from the edges $\{v_{x-1},v_x\}, \{v_x,v_{x+1}\}$ of $\TP$ all labels $l\leq b_x$. In the resulting temporal line we then compute 
the foremost temporal path~$P_2$, %
which arrives at $v_n$ at time $t_2$. 
By applying this procedure iteratively, we either compute $p$ temporally non-intersecting 
temporal paths $P_1, P_2, \ldots, P_p$, 
starting at time $t$ and arriving at time $t_p$,
or we conclude that $t_p=\infty$. 
The values $\{t^{*}_p \mid 1\leq p \leq \ell-j\}$ (for the $(v_n, v_0)$-temporal paths) can be computed in a symmetric way. 
All these computations together can be done in linear time.

From the above it follows that
we can decide \STempDisjointPaths\ by checking whether $\min\{L(0,0,1),R(0,0,1)\}$ is finite or not. 
In total, there are $2k\ell T$ values $L(i,j,t)$ and $R(i,j,t)$. 
Observe that, for every pair $i,j$, we only need to compute the value $L(i,j,t)$  (resp.~$R(i,j,t')$) for one specific value of $t$ (resp.~$t'$).
This observation ensures that the running time of the algorithm is polynomial.
For this we need the following observation. Assume that, in the recursive tree originated at $L(0,0,1)$, 
we need to compute at two different places the values $L(i,j,t)$ and $L(i,j,t')$, where $t' > t$. Then, since obviously $L(i,j,t) \leq L(i,j,t')$, 
at the second place we can just replace $L(i,j,t')$ by $\infty$, thus stopping the recursive calculations at that branch of the recursive tree. 
Similarly, if we need to compute at two different places the values $R(i,j,t)$ and $R(i,j,t')$, where $t' > t$, 
we replace $R(i,j,t')$ by $\infty$ at the second place of the recursive tree. That is, for every pair $i,j$, we just need to compute only one value $L(i,j,t)$ (resp.~$R(i,j,t)$). Therefore, we can build two matrices $M^L$ and $M^R$, each of size $(k+1)\times (\ell+1)$, such that 
$M^L (i,j)$ (resp.~$M^R (i,j)$) stores the unique value of $t$ for which we need to compute $L(i,j,t)$ (resp.~$R(i,j,t)$). 
That is, in the recursion tree originated at $L(0,0,1)$, for every pair $i,j$ we only need to compute the values $L(i,j,M^L (i,j))$ 
and $R(i,j,M^R (i,j))$.

Similarly, for the recursive tree originated at $R(0,0,1)$ we need to build two other matrices $N^L$ and $N^R$ (each of size $(k+1)\times (\ell+1)$) 
for the same purpose, as the recursive tree originated at $R(0,0,1)$ is different to the one originated at $L(0,0,1)$. 
That is, in the recursion tree originated at $R(0,0,1)$, for every pair $i,j$ we only need to compute the values $L(i,j,N^L (i,j))$ 
and $R(i,j,N^R (i,j))$. 

Each of these four $(k+1)\times (\ell+1)$ matrices can be computed by running $O(k\ell(k+\ell))$ times the foremost temporal path algorithm 
(in order to compute at each step in linear time $O\left(|V|+\sum_{i=1}^{\lifetime}|E_i|\right)$ the values $\{t_p \mid 1\leq p \leq k-i\}$ and $\{t^{*}_p \mid 1\leq p \leq \ell-j\}$, respectively). 
Once we have built these four matrices, we can iteratively compute the value $L(0,0,1)$ (resp.~$R(0,0,1)$) 
in at most $k \ell$ computations, each of which takes at most $O(k+\ell)$ time (cf.~equations (\ref{L-eq})-(\ref{R-eq})). 
Therefore, all computations can be done in $O\left(k\ell\left(k+\ell\right)\cdot \left(|V|+\sum_{i=1}^{\lifetime}|E_i|\right)\right)$ time.

This completes the proof for the case of \TempDisjointPaths. 
Finally, it is easy to see that in the problem~\TempDisjointWalks, 
in any optimal solution every temporal walk is a temporal path, as every temporal walk is from $v_0$ to $v_n$ or from $v_n$ to $v_0$. 
Therefore, the above algorithm for \TempDisjointPaths\ also solves
\TempDisjointWalks. 
\end{proof}

\section{Conclusion}
Introducing temporally disjoint paths and walks, we modeled
the property that agents moving along these 
never meet, even though they might visit the same vertices. 
We identified an unexpected difference in their computational complexity: 
\STempDisjointPaths\ is NP-hard even for two paths, while 
\STempDisjointWalks\ can be done in polynomial time for a constant number of walks 
(however it becomes W[1]-hard when parameterized by the number of walks). 
On the contrary, while \STempDisjointPaths\ becomes fixed-parameter tractable for the number of
paths if the underlying graph is a path,
we leave open whether we can obtain a similar result for \STempDisjointWalks\, which seems to be  much more complicated than the path version.
\bibliographystyle{named}
\bibliography{bibliography}

\end{document}